\def\Gr#1{\mathbf{#1}}
\theoremstyle{break}
\newtheorem{Thm}{Theorem}[section]
\newtheorem{Lemme}{Lemma}[section]
\newtheorem{Coro}{Corollary}[section]
\newtheorem{Prop}{Proposition}[section]
\newtheorem{rem}{Remark}[section]
\newcommand\Mc{\widehat{\Gr{M}}}
\newcommand\be{\begin{pmatrix}}
\newcommand\en{\end{pmatrix}}
\newcommand\bep{\left(}
\newcommand\enp{\right)}
\newcommand\disp{\displaystyle}
\def\R{\mathbb{R}}
\def\D{\mathcal{D}}
\def\NN{\mathcal{N}}
\def\A{\mathcal{A}}
\def\L{\mathcal{L}}
\def\EOP{\ \hfill \rule{0.5em}{0.5em} }
\def\Mm{M_p(\R)}
\def\Mt{\Gr{M}^{T}}
\def\Db{\overline{\D}}
\newcommand{\tr}{\,\text{Tr}}
\renewcommand{\top}{T}
\begin{document}
\title{Parameter Estimation For Multivariate Generalized Gaussian Distributions}


\author{Fr\'ed\'eric Pascal, Lionel Bombrun, Jean-Yves Tourneret and Yannick Berthoumieu.
\IEEEcompsocitemizethanks{
\IEEEcompsocthanksitem F. Pascal is with Sup\'elec/SONDRA, 91192 Gif-sur-Yvette Cedex, France (e-mail: frederic.pascal@supelec.fr)
\IEEEcompsocthanksitem L. Bombrun and Y. Berthoumieu are with Universit\'e de Bordeaux, IPB, ENSEIRB-Matmeca, Laboratoire IMS, France (e-mail: lionel.bombrun@ims-bordeaux.fr; yannick.berthoumieu@ims-bordeaux.fr)
\IEEEcompsocthanksitem J.-Y. Tourneret is with Universit\'e de Toulouse, IRIT/INP-ENSEEIHT, (e-mail:jean-yves.tourneret@enseeiht.fr)}
}

\markboth{Submitted to IEEE Trans. on Signal Processing}{Pascal
\MakeLowercase{\textit{et al.}}: Parameter Estimation For Multivariate Generalized Gaussian Distributions}

\maketitle

\begin{abstract}
Due to its heavy-tailed and fully parametric form, the multivariate generalized Gaussian distribution (MGGD) has been receiving much attention for modeling extreme events in signal and image processing applications. Considering the estimation issue of the MGGD parameters, the main contribution of this paper is to prove that the maximum likelihood estimator (MLE) of the scatter matrix exists and is unique up to a scalar factor, for a given shape parameter $\beta\in(0,1)$. Moreover, an estimation algorithm based on a Newton-Raphson recursion is proposed for computing the MLE of MGGD parameters. Various experiments conducted on synthetic and real data are presented to illustrate the theoretical derivations in terms of number of iterations and number of samples for different values of the shape parameter. The main conclusion of this work is that the parameters of MGGDs can be estimated using the maximum likelihood principle with good performance.

\end{abstract}

\begin{keywords}
Multivariate generalized Gaussian distribution, covariance matrix estimation, fixed point algorithm.
\end{keywords}

\IEEEpeerreviewmaketitle


\section{Introduction}
\label{sec:intro}
\PARstart{U}{nivariate} and multivariate generalized Gaussian distributions (GGDs) have received much attention in the literature. Historically, this family of distributions has been introduced  in \cite{subbotin1923law}. Some properties of these distributions have been reported in several papers such as~\cite{Gomez1998,Gomez2002,Song2006}. These properties include various stochastic representations, simulation methods and probabilistic characteristics.  GGDs belong to the family of elliptical distributions (EDs)~\cite{Fang-90a, Fang-90b}, originally introduced by Kelker in \cite{kelker1970distribution} and studied in~\cite{Rangaswamy1993,Rangaswamy1995}. For $\beta \in (0,1]$, Multivariate GGDs (MGGDs) are a subset of the spherically invariant random vector (SIRV) distributions. For $\beta>1$, MGGDs are no longer SIRV distributions as illustrated  in Fig.~\ref{fig:SIRV_MGDD} (for more details, see \cite{gomez2008multivariate}).



MGGDs have been used intensively in the image processing community. Indeed, including Gaussian and Laplacian distributions as special cases, MGGDs are potentially interesting for modeling the statistical properties of various images or features extracted from these images. In particular, the distribution of wavelet or curvelet coefficients has been shown to be modeled accurately by GGDs \cite{Mallat1989,Chang2000a,Chang2000b,Boubchir2005}. This property has been exploited for many image processing applications including image denoising \cite{Moulin1999,Sendur2002,Cho2005,Cho2009}, context-based image retrieval \cite{Do2002,Verdoolaege2008}, image thresholding \cite{Bazi2007} or texture classification in industrial problems \cite{Scharcanski2007}. Other applications involving GGDs include radar \cite{Desai2003}, video coding and denoising \cite{Coban1996,Bicego2008,Yang2009} or biomedical signal processing \cite{Bicego2008,Elguebaly2010,LeCam2009}. Finally, it is interesting to note that complex GGDs have been recently studied in \cite{Novey2001,Novey2009}  and that multivariate regression models with generalized Gaussian errors have been considered in \cite{liu2008multivariate}.

Considering the important attention devoted to GGDs, estimating the parameters of these distributions is clearly an interesting issue. Classical estimation methods that have been investigated for univariate GGDs include the maximum likelihood (ML) method \cite{agro1995maximum} and the method of moments \cite{Varanasi1989}. In the multivariate context, MGGD parameters can be estimated by a least-squares method as in~\cite{Cho2005} or by minimizing a $\chi^2$ distance between the histogram of the observed data and the theoretical probabilities associated with the MGGD~\cite{Khel-04}. Estimators based on the method of moments and on the ML method have also been proposed in~\cite{Verd-11a,Verd-11b,zhang2013multivariate}.

Several works have analyzed covariance matrix estimators defined under different modeling assumptions. On the one hand, fixed point (FP) algorithms have been derived and analyzed in~\cite{Pasc-08, chitour2008exact} for SIRVs. On the other hand, in the context of robust estimation, the properties of M-estimators have been studied by Maronna in~\cite{Maro-76}. Unfortunately, Maronna's conditions are not fully satisfied for MGGDs (see remark~\ref{rem:3}). This paper shows that despite the non-applicability of Maronna's results, the MLE of MGGD parameters exists, is unique and can be computed by an FP algorithm. Although the methodology adopted in this paper has some similarities with the one proposed in \cite{Pasc-08, chitour2008exact}, there are also important differences which require a specific analysis (see for instance remark~\ref{rem:4}). More precisely, the FP equation of \cite{Pasc-08} corresponds to an approximate MLE for SIRVs while in \cite{chitour2008exact} the FP equation results from a different problem (see Eq.~(14) in \cite{chitour2008exact} compared to Eq.~\eqref{functionF} of this paper). The contributions of this paper are to establish some properties related to the FP equation of the ML estimator for MGGDs. More precisely, we show that for a given shape parameter $\beta$ belonging to $(0,1)$, the MLE of the scatter matrix $\mathbf{M}$ exists and is unique up to a scalar factor\footnote{From the submission of this paper, another approach based on geodesic convexity was proposed in (include reference paper Wiesel).}. An iterative algorithm based on a Newton-Raphson procedure is then proposed to compute the MLE of $\mathbf{M}$.\\

The paper is organized as follows. Section~\ref{sec:formulation} defines the MGGDs considered in this study and derives the MLEs of their parameters. Section~\ref{sec:main} presents the main theoretical results of this paper while a proof outline is given in Section~\ref{section4}. For presentation clarity, full demonstrations are provided in the appendices. Section~\ref{sec:simultations} is devoted to simulation results conducted on synthetic and real data. The convergence speed of the proposed estimation algorithm as well as the bias and consistency of the scatter matrix MLE are first investigated using synthetic data. Experimentations performed on real images extracted from the VisTex database are then presented. Conclusions and future works are finally reported in Section~\ref{sec:conclusion}.



\section{Problem formulation} \label{sec:formulation}

\subsection{Definitions}
The probability density function of an MGGD in $\mathbb{R}^p$ is defined by~\cite{Kotz-68}
\begin{align}
\label{eq:pdf_MGGD_m}
p(\mathbf{x}\vert \mathbf{M}, m, \beta) & = \dfrac{1}{\vert\mathbf{M}\vert^\frac{1}{2}} h_{m,\beta}\left( \mathbf{x}^T \mathbf{M}^{-1} \mathbf{x} \right)
\end{align}
for any $\mathbf{x} \in  \mathbb{R}^p$, where $\mathbf{M}$ is a $p\times p$ symmetric real scatter matrix, $\mathbf{x}^T$ is the transpose of the vector $\mathbf{x}$, and $h_{m,\beta}\left( \cdot \right)$ is a so-called density generator defined by
\begin{equation} \label{eq:function_h}
h_{m,\beta}\left( y \right) = \dfrac{\beta \Gamma\left(\frac{p}{2}\right)}{\pi^{\frac{p}{2}} \Gamma\left( \frac{p}{2\beta} \right) 2^{\frac{p}{2\beta}}} \dfrac{1}{m^\frac{p}{2}}
\exp\left( -\dfrac{y^{\beta}    }{2m^\beta}  \right)
\end{equation}
for any $y \in \mathbb{R}^+$, where $m$ and $\beta$ are the MGGD scale and shape parameters. The matrix $\mathbf{M}$ will be normalized in this paper according to $\tr\left( \mathbf{M} \right)=p$, where $\tr(\mathbf{M} )$ is the trace of the matrix $\mathbf{M} $. It is interesting to note that letting $\beta=1$ corresponds to the multivariate Gaussian distribution. Moreover, when $\beta$ tends toward infinity, the MGGD is known to converge in distribution to a multivariate uniform distribution (see~\eqref{eq:representation_stochastique}).

\subsection{Stochastic representation}
Let $\mathbf{x}$ be a random vector of $\mathbb{R}^p$ distributed according to an MGGD with scatter matrix $\mathbf{\Sigma}=m\mathbf{M}$ and shape parameter $\beta$. G\'{o}mez \textit{et al.} have shown that $\mathbf{x}$ admits the following stochastic representation~\cite{Gomez1998}
\begin{align}
\label{eq:representation_stochastique}
\mathbf{x} \overset{d}{=} \tau ~ \mathbf{\Sigma}^{\frac{1}{2}} ~ \mathbf{u}
\end{align}
where $\overset{d}{=}$ means equality in distribution, $\mathbf{u}$ is a random vector uniformly distributed on the unit sphere of $\mathbb{R}^p$, and $\tau$ is a scalar positive random variable such that
\begin{align}
\tau^{2\beta} \sim \Gamma \left( \frac{p}{2\beta}, 2 \right)
\end{align}
where $\Gamma (a,b)$ is the univariate gamma distribution with parameters $a$ and $b$ (see \cite{John-94} for definition).

\subsection{MGGD parameter estimation for known $\beta$}
Let $(\mathbf{x}_1, \ldots, \mathbf{x}_N)$ be $N$ independent and identically distributed (i.i.d.) random vectors distributed according to an MGGD with parameters $\mathbf{M}, m$ and $\beta$. This section studies estimators of $\mathbf{M}$ and $m$ based on $(\mathbf{x}_1, \ldots, \mathbf{x}_N)$ for a known value of $\beta \in (0,1)$\footnote{We note here that most values of $\beta$ encountered in practical applications belong to the interval $(0,1)$. For instance, $\beta = 0.8$ is suggested in \cite{hernandez2000dct} as a good choice for most images.}. The MGGD is a particular case of elliptical distribution that has received much attention in the literature. Following the results of \cite{Gini-02} for real elliptical distributions, by differentiating the log-likelihood of vectors $(\mathbf{x}_1, \ldots, \mathbf{x}_N)$ with respect to $\mathbf{M}$, the MLE of the matrix $\mathbf{M}$ satisfies the following FP equation
\begin{align}
\mathbf{M} = \dfrac{2}{N} \sum\limits_{i=1}^N \dfrac{-g_{m,\beta}(\mathbf{x}_i^T \mathbf{M}^{-1} \mathbf{x}_i)}{h_{m,\beta}(\mathbf{x}_i^T \mathbf{M}^{-1} \mathbf{x}_i)} \mathbf{x}_i \mathbf{x}_i^T
\end{align}
where $g_{m,\beta}(y) = \partial h_{m,\beta}(y) / \partial y$. In the particular case of an MGGD with known parameters $m$ and $\beta$, straightforward computations lead to
\begin{align}
\label{eq:point_fixe_1}
\mathbf{M} = \dfrac{\beta}{N m^{\beta}} \sum\limits_{i=1}^N \dfrac{\mathbf{x}_i \mathbf{x}_i^T}{ \left( \mathbf{x}_i^T \mathbf{M}^{-1} \mathbf{x}_i \right)^{1-\beta}}.
\end{align}
When the parameter $m$ is unknown, the MLEs of $\mathbf{M}$ and $m$ are obtained by differentiating the log-likelihood of $(\mathbf{x}_1, \ldots, \mathbf{x}_N)$ with respect to $\mathbf{M}$ and $m$ yielding
\begin{align}
\mathbf{M} &= \dfrac{\beta}{N m^{\beta}} \sum\limits_{i=1}^N \dfrac{\mathbf{x}_i \mathbf{x}_i^T}{ \left( \mathbf{x}_i^T \mathbf{M}^{-1} \mathbf{x}_i \right)^{1-\beta}}, \label{MvM}\\
m &= \left[ \dfrac{\beta}{pN} \sum\limits_{i=1}^N \left( \mathbf{x}_i^T \mathbf{M}^{-1} \mathbf{x}_i \right)^{\beta} \right]^{\frac{1}{\beta}} \label{Mvm}.
\end{align}
After replacing $m$ in \eqref{eq:point_fixe_1} by its expression \eqref{Mvm}, the following result can be obtained
\begin{align}
\label{eq:point_fixe_2}
\mathbf{M} = \dfrac{1}{N} \sum\limits_{i=1}^N \dfrac{Np}{y_i + y_i^{1-\beta} \disp \sum_{j\neq i} y_j^{\beta} } \,\mathbf{x}_i \mathbf{x}_i^T\,.
\end{align}
As mentioned before and confirmed by \eqref{eq:point_fixe_2}, $\mathbf{M}$ can be estimated independently from the scale parameter $m$. Moreover, the following remarks can be made about \eqref{eq:point_fixe_2}.

\begin{rem}
\label{rem:1}
When $\beta=1$, Eq.~\eqref{eq:point_fixe_2} is close to the sample covariance matrix (SCM) estimator (the only difference between the SCM estimator and \eqref{eq:point_fixe_2} is due to the estimation of the scale parameter that equals $1$ for the multivariate Gaussian distribution). For $\beta=0$, \eqref{eq:point_fixe_2} reduces to the FP covariance matrix estimator that has received much attention in \cite{Gini-02,Wiesel2012,Pasc-08a}.
\end{rem}

\begin{rem}
\label{rem:2}
Equation~\eqref{eq:point_fixe_2} remains unchanged if $\mathbf{M}$ is replaced by $\alpha\,\mathbf{M}$ where $\alpha$ is any non-zero real factor. Thus, the solutions of \eqref{eq:point_fixe_2} (when there exist) can be determined up to a scale factor $\alpha$. The normalization $\tr \left( \mathbf{M} \right)=p$ will be adopted in this paper and will be justified in the simulation section.
\end{rem}

\begin{rem}
\label{rem:3}
Let us consider the function $f_i$ defined by
\begin{align}
\label{eq:u_Maronna}
f_i(y) = \cfrac{Np}{y + c_i y^{1-\beta}}, \; \forall y \in \mathbb R^+
\end{align}
where $c_i$ is a positive constant independent of $y$ (the index $i$ is used here to stress the fact that $c_i = \disp \sum_{j\neq i} y_j^{\beta}$ changes with $i$ but does not depend on $y_i$). Equation~\eqref{eq:point_fixe_2} can be rewritten as
\begin{align}
\label{eq:point_fixe_2_bis_Mar}
\mathbf{M} = \dfrac{1}{N} \sum\limits_{i=1}^N f_i \left(y_i \right) \,\mathbf{x}_i \mathbf{x}_i^T.
\end{align}
Roughly speaking\footnote{Actually, Maronna's function depends only on the $i^{th}$ sample and not on all the samples as it is the case here!}, $f_i$ satisfies Maronna's conditions (recapped below, see~\cite[p. 53]{Maro-76} for more details) for any $\beta \in (0,1)$ except the continuity at $y=0$.\\ \textit{Maronna's conditions for a function $f: \mathbb{R} \rightarrow \mathbb{R} $}\begin{itemize}
 \item[(i)] $f$ is non-negative, non increasing, and continuous on $\left[0, \infty \right) $.
 \item[(ii)] Let $\psi(s) = s~f(s)$ and $K= \sup \limits_{s\geq 0} \psi(s)$. The function $\psi$ is non decreasing and strictly increasing in the interval defined by $\psi<K$ with $p~<~K~<~\infty$.
 \item[(iii)] Let $P_N\left(\cdot\right)$ denotes the empirical distribution of $\mathbf{x}_1, \ldots, \mathbf{x}_N$. There exists $a > 0$ such that for all hyperplanes $W$ with $\text{dim}(W)\leq p-1$
\begin{align}
P_N\left(W\right) \leq 1 - \dfrac{p}{K} -a.
\end{align}
\end{itemize}
Because of non continuity of $f_i$ around $0$, the properties of M-estimators derived by Maronna cannot be applied directly to the estimators of the MGGD parameters. The objective of the next section is to derive similar properties for the estimator of $\mathbf{M}$ defined by the FP equation \eqref{eq:point_fixe_2}.
\end{rem}

\subsection{MGGD parameter estimation for unknown $\beta$}
When the shape parameter $\beta$ of the MGGD is unknown, the MLE of $\mathbf{M}$, $m$ and $\beta$ is obtained by differentiating the log-likelihood of $(\mathbf{x}_1, \ldots, \mathbf{x}_N)$ with respect to $\mathbf{M}$ and $m$ and $\beta$, i.e., by combining \eqref{MvM} and \eqref{Mvm} with the following relation
\begin{align}
\alpha(\beta) & = \dfrac{pN}{2\sum\limits_{i=1}^N y_i^{\beta}} \sum\limits_{i=1}^N  \left[ y_i^{\beta} \ln(y_i) \right] - \dfrac{pN}{2\beta} \left[ \Psi\left( \frac{p}{2\beta} \right) + \ln2 \right] \nonumber \\
\label{eq:MV_beta}
& - N - \dfrac{pN}{2\beta} \ln\left( \dfrac{\beta}{pN} \sum\limits_{i=1}^N y_i^{\beta} \right) = 0
\end{align}
where $\Psi(\cdot)$ is the digamma function. Equation~\eqref{eq:point_fixe_2} shows that $\mathbf{M}$ and $\beta$ can be estimated independently from the scale parameter $m$.


\section{Statements of the main results}\label{sec:main}
As the estimation scenario presented in the previous section has some similarities with the FP estimator studied in \cite{Pasc-08}, similar results about the estimator existence, uniqueness (up to a scale factor) and FP algorithm convergence are expected to be true. This section summarizes the properties of the FP estimator defined by \eqref{eq:point_fixe_2} for a known value of $\beta \in (0,1)$ (all proofs are provided in the appendices to simplify the reading). The case of an unknown value of $\beta$ will be discussed in the simulation section.

\subsection{Notations}
For any positive integer $n$, $\llbracket 1,n\rrbracket$ denotes the set of integers
$\{1,\hdots,n\}$. For any vector $\mathbf{x} \in \R^p$, $\|\mathbf{x}\|$ denotes the Euclidean norm of $\mathbf{x}$ such as $\|\mathbf{x}\|^2=\mathbf{x}^T \mathbf{x}$, where $\mathbf{x}^T$ is the transpose of $\mathbf{x}$. Throughout the paper, we will use several basic results about square matrices, especially regarding the diagonalization of real symmetric and orthogonal matrices. We invite the reader to consult \cite{hj} for details about these standard results. Denote as $M_p(\R)$ the set of $p \times p$ real matrices, $SO(p)$ the set of $p\times p$ orthogonal matrices and $\Mt$ the transpose of $\Gr{M}$. The identity matrix of $\Mm$ will be denoted as $\Gr{I}_p$. Several subsets of matrices used in the sequel are defined below
\begin{itemize}
\item [$\ast$] $\D$ is the subset of $\Mm$ defined by the symmetric positive definite matrices;
\item [$\ast$] $\overline{\D}$ is the closure of $\D$ in $\Mm$, i.e., the subset of $\Mm$ defined by the symmetric non negative definite matrices;
\item [$\ast$] For all $\alpha > 0$
 \begin{equation}
 \begin{array}{l} \D(\alpha) = \left\{\Gr{M} \in \D \; |\,\, ||\Gr{M}|| = \alpha\right\},  \\ \overline{\D}(\alpha)=\left\{\Gr{M} \in \Db \; |\,\, ||\Gr{M}|| = \alpha\right\}, \end{array}
 \end{equation}
 where $\Db(\alpha)$ is a compact subset of $\Mm$, $\vert\vert \cdot \vert\vert$ being the Frobenius norm.
\end{itemize}
For $\Gr{M}\in \D$, we introduce the open-half line spanned by $\Gr{M}$ defined by $\L_{\Gr{M}} = \{\lambda \,\Gr{M}, \; \lambda>0 \}$. Note that the order associated with the cone structure of $\D$ is called the Loewner order for symmetric matrices of $\Mm$ and is defined as follows: for any pair of two symmetric $p\times p$ real matrices  $(\Gr{A},\Gr{B})$,  $\Gr{A} \leq \Gr{B}$ ($\Gr{A}<\Gr{B}$ respectively) means that the quadratic form defined by $\Gr{B}-\Gr{A}$ is non negative (positive definite respectively), i.e., for all non zero $\Gr{x} \in\R^p$, $\Gr{x}^T  \, (\Gr{A} - \Gr{B}) \, \Gr{x} \geq 0$, ($>0$ respectively). Using that order, one has $\Gr{M} \in \D$ ($\in \Db$ respectively) if and  and only if $\Gr{M} > \Gr{0}$
($\Gr{M} \geq \Gr{0}$ respectively).

This section will make use of the following two applications
\begin{equation} \label{functionF}
\begin{array}{lllll}
F_{\chi}& : & \D & \rightarrow & \R^{+} \setminus \{0\} \\
& & \Gr{M} & \mapsto & |\Gr{M}|^{-1}\,\disp
\bep\sum_{i=1}^N y_i^{\beta}\enp^{-p/\beta}
\end{array}
\end{equation}
and
\begin{equation} \label{functionf}
\begin{array}{lllll}
f_{\chi}& : & \D & \rightarrow & \D \\
& & \Gr{M} & \mapsto & \dfrac{1}{N} \disp \sum\limits_{i=1}^N \dfrac{Np}{  y_i + y_i^{1-\beta} \disp \sum_{j\neq i} y_j^{\beta}} \,\mathbf{x}_i \mathbf{x}_i^T
\end{array}
\end{equation}
where $\chi=(\mathbf{x}_1, \ldots, \mathbf{x}_N) $, $y_i = \mathbf{x}_i^T {\mathbf{M}}^{-1} \mathbf{x}_i$ and $\beta \in (0,1)$. The function $F_{\chi}$ is the likelihood of $(\mathbf{x}_1, \ldots, \mathbf{x}_N)$ in which the parameter $m$ has been replaced by its estimator \eqref{Mvm}, up to a multiplicative constant and a power factor. Indeed:
\begin{align*}
\prod_{i=1}^N p(\mathbf{x}_i\vert \mathbf{M}, \hat{m}, \beta) & = \left[ \frac{\beta \Gamma\left(\frac{p}{2}\right)}{\pi^{\frac{p}{2}} \Gamma\left( \frac{p}{2\beta} \right) 2^{\frac{p}{2\beta}}} \right]^N \exp\left( -\frac{pN}{2\beta} \right) \\
& \times \left( \frac{pN}{\beta} \right)^{\frac{pN}{2\beta}} \Big[ F_{\chi}(\Gr{M}) \Big]^{N/2}.
\end{align*}

It is clear that $F_{\chi}$ is homogeneous of degree zero whereas $f_{\chi}$ is homogeneous of degree one, i.e., for all $\lambda>0$ and $\Gr{M}\in \D$, one has
$$
F_{\chi}(\lambda \, \Gr{M})=F_{\chi}(\Gr{M}),\ \ \ f_{\chi}(\lambda \Gr{M})=\lambda f_{\chi}(\Gr{M}).
$$
In order to understand the relationships between the two functions $F_{\chi}$ and $f_{\chi}$, we can compute the gradient of $F_{\chi}$ at $\Gr{M}\in \D$. Straightforward computations lead to
\begin{equation} \label{nablaF}
\nabla F_{\chi}(\Gr{M}) = F_{\chi}(\Gr{M}) \, \Gr{M}^{-1} \, \left[ f_{\chi}(\Gr{M}) - \Gr{M} \right]  \, \Gr{M}^{-1}.
\end{equation}
Clearly $\Gr{M}$ is an FP of $f_{\chi}$ if and only if $\Gr M$ is a critical point of the vector field defined by $\nabla F_{\chi}$ on $\D$, i.e., $\nabla F_{\chi}(\Gr{M})=0$.

\begin{rem}\label{rem:4}
There are some close links between the MGGDs and the SIRV distributions (that are both specific elliptical distributions). However, all MGGDs are not SIRV distributions and conversely. As a consequence, the FP equation \eqref{eq:point_fixe_2} associated with the MGGDs relies on the function $f_{\chi}$ which differs from the FP equation studied in \cite{Pasc-08} (which corresponds to the particular case $\beta=0$) and from that of \cite{chitour2008exact} which corresponds to SIRVs with random multipliers $\tau$. Similarly, the shape of the function $F_{\chi}$ differs significantly from the likelihoods studied in \cite{Pasc-08} and \cite{chitour2008exact} that are defined as products of integrals depending on the unknown texture distribution (see for \cite{Pasc-08} and \cite{chitour2008exact} for more details).
\end{rem}

In the sequel, we also use $f^{n}$ for $n\geq 1$ to denote the $n$-th iterate of $f$, i.e., $f^{n}:=f\circ ... \circ f$, where $f$ is repeated $n$ times. We also adopt the standard convention $f^{0}:=\textrm{Id}_{\D}$, where $\textrm{Id}_{\D}$ is the identity function defined in $\D$. To finish this section, we introduce an important assumption about the vectors $\Gr{x}_i$ for $1\leq i \leq N$
\begin{itemize}
\item $(H)$: For any set of $p$ indices belonging to $\llbracket 1,N\rrbracket$ and satisfying $i(1)<...<i(p)$, the vectors $\Gr{x}_{i(1)}, \hdots, \Gr{x}_{i(p)}$ are linearly independent.
\end{itemize}
This hypothesis is a key assumption for obtaining all our subsequent results. Hypothesis $(H)$ has the following trivial but fundamental consequence that we state as a remark
\begin{rem}\label{rem:1bis}
For all $n$ vectors $\Gr{x}_{i(1)}, \hdots,\Gr{x}_{i(n)}$ with $1\leq n\leq p$, $1\leq i\leq N$, the vector space generated by $\Gr{x}_{i(1)}, \hdots,\Gr{x}_{i(n)}$ has dimension $n$.
\end{rem}

\subsection{Contributions}
The contributions of this paper are summarized in the following theorems with proofs outlined in the next section.
\begin{Thm}\label{main-th}
For a given value of $\beta \in (0,1)$, there exists $\Mc_{FP}\in \D$ with unit norm such that, for all $\alpha>0$, $f_{\chi}$ admits a unique FP of norm $\alpha>0$ equal to $\alpha \, \Mc_{FP}$. Moreover, $F_{\chi}$ reaches its maximum in $\L_{\Mc_{FP}}$, the open half-line spanned by $\Mc_{FP}$.
\end{Thm}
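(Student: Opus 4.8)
The plan is to deduce the whole statement from the variational identity \eqref{nablaF}. Since $F_{\chi}>0$ on $\D$ and every $\Gr{M}\in\D$ is invertible, \eqref{nablaF} shows that $\Gr{M}$ is a fixed point of $f_{\chi}$ if and only if $\nabla F_{\chi}(\Gr{M})=\Gr{0}$. Hence it suffices to prove that $F_{\chi}$ attains its maximum on $\D$ and that the set of its critical points is exactly one open half-line $\L_{\Mc_{FP}}$; the existence and uniqueness of a fixed point of $f_{\chi}$ of norm $\alpha$ for each $\alpha>0$ then follow at once from the degree-one homogeneity $f_{\chi}(\lambda\Gr{M})=\lambda f_{\chi}(\Gr{M})$, and the last assertion of the theorem is simply the description of that maximum set.

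\emph{Existence of a maximizer.} Because $F_{\chi}$ is homogeneous of degree zero, I would pass to the variable $\Gr{A}=\Gr{M}^{-1}$, so that $F_{\chi}(\Gr{M})=|\Gr{A}|\bigl(\sum_i(\Gr{x}_i^{\top}\Gr{A}\Gr{x}_i)^{\beta}\bigr)^{-p/\beta}$, still homogeneous of degree zero, and restrict it to the compact set $\{\Gr{A}\in\Db:\|\Gr{A}\|=1\}$. On that set $|\Gr{A}|\le\lambda_{\min}(\Gr{A})$, which tends to $0$ near the boundary, whereas hypothesis $(H)$ — through remark~\ref{rem:1bis}, which forces the $\Gr{x}_i$ to span $\R^{p}$ — provides a constant $c_0>0$ with $\max_i(\Gr{v}^{\top}\Gr{x}_i)^2\ge c_0$ for every unit vector $\Gr{v}$; applying this to a leading eigenvector of $\Gr{A}$ gives $\sum_i(\Gr{x}_i^{\top}\Gr{A}\Gr{x}_i)^{\beta}\ge(c_0\,p^{-1/2})^{\beta}>0$, so the second factor stays bounded away from $0$. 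Consequently $F_{\chi}$, read in the variable $\Gr{A}$, extends continuously by $0$ to the boundary and, being positive in the interior, attains a strictly positive maximum at some interior point $\Gr{A}^{\ast}$. Going back to $\Gr{M}=(\Gr{A}^{\ast})^{-1}$ and rescaling to unit Frobenius norm yields $\Mc_{FP}\in\D$ which, by homogeneity, maximizes $F_{\chi}$ over the whole open set $\D$; hence $\nabla F_{\chi}(\Mc_{FP})=\Gr{0}$ and $f_{\chi}(\Mc_{FP})=\Mc_{FP}$.

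\emph{Uniqueness.} Here I would exploit the nonpositively curved geometry of $\D$. Along an affine-invariant geodesic $\Gr{M}(t)=\Gr{M}^{1/2}e^{tH}\Gr{M}^{1/2}$, with $H$ symmetric, the term $-\log|\Gr{M}(t)|$ is affine in $t$, whereas $y_i(t)=\Gr{x}_i^{\top}\Gr{M}(t)^{-1}\Gr{x}_i$ is a positive combination of the exponentials $e^{-th_k}$ ($h_k$ the eigenvalues of $H$), hence log-convex; therefore $\sum_i y_i(t)^{\beta}$ is log-convex and $\log F_{\chi}(\Gr{M}(t))=-\log|\Gr{M}(t)|-\tfrac{p}{\beta}\log\sum_i y_i(t)^{\beta}$ is concave in $t$, and it is affine in $t$ exactly when $\sum_i y_i(t)^{\beta}$ is log-affine, which by remark~\ref{rem:1bis} — the vectors $\Gr{M}^{-1/2}\Gr{x}_i$ span $\R^{p}$ and hence cannot all lie in a single eigenspace of $H$ — happens only when $H\in\R\,\Gr{I}_p$, i.e. only along a half-line. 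Geodesic concavity then makes every critical point of $F_{\chi}$ a global maximum, and if $\Gr{M}_1,\Gr{M}_2$ were two maximizers, $\log F_{\chi}$ would be constant, hence affine, along the geodesic joining them, forcing $\Gr{M}_2\in\L_{\Gr{M}_1}$. Thus the critical set of $F_{\chi}$ — equivalently the fixed-point set of $f_{\chi}$ — is exactly $\L_{\Mc_{FP}}$, the unique fixed point of norm $\alpha$ is $\alpha\,\Mc_{FP}$, and $F_{\chi}$ reaches its maximum precisely on $\L_{\Mc_{FP}}$.

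I expect the uniqueness part to be the real obstacle. The coupling terms $c_i=\sum_{j\ne i}y_j^{\beta}$ and the discontinuity of the $f_i$ at the origin (remark~\ref{rem:3}) prevent a direct appeal to Maronna's M-estimator theory, so one must instead set up the geodesic concavity of $\log F_{\chi}$ and carefully characterize when it degenerates to an affine function — which is precisely where hypothesis $(H)$ is used. A more computational alternative, in the spirit of \cite{Pasc-08}, would compare two fixed points via $t_{0}=\sup\{t>0:\Gr{M}_1\le t\,\Gr{M}_2\}$ and exploit the Loewner monotonicity of $f_{\chi}$ together with the structure of its denominators, but the concavity route looks cleaner.
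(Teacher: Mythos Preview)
Your proposal is correct, and both parts take a genuinely different route from the paper.

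\textbf{Existence.} The paper works directly in the variable $\Gr{M}$: it extends $F_{\chi}$ continuously by zero to $\Db\setminus\{\Gr{0}\}$ (Lemma~\ref{le:1}, which requires a nontrivial boundary analysis borrowed from \cite{Pasc-08}), deduces that the maximum over the compact set $\Db(1)$ is attained in the interior, and then uses a Lagrange-multiplier computation (Lemma~\ref{le:2}) to check that the gradient actually vanishes at this constrained maximizer. Your change of variable $\Gr{A}=\Gr{M}^{-1}$ is neater: on the slice $\|\Gr{A}\|=1$ the determinant factor $|\Gr{A}|$ vanishes at the boundary while the second factor stays uniformly bounded away from zero, so the continuous extension by zero is immediate; and since the resulting maximizer is a maximizer on the full open cone $\D$ (by degree-zero homogeneity), the vanishing of $\nabla F_{\chi}$ requires no Lagrange argument.

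\textbf{Uniqueness.} Here the two approaches diverge substantially. The paper proceeds through the map $f_{\chi}$: it establishes Loewner monotonicity (P1) and strict super-additivity (P2) (Proposition~\ref{pro:2}), then invokes the machinery of \cite{Pasc-08} to conclude that every orbit of the iteration $\Gr{M}_{k+1}=f_{\chi}(\Gr{M}_k)$ is bounded and that all fixed points are collinear. Your argument stays on the side of $F_{\chi}$: geodesic concavity of $\log F_{\chi}$ along affine-invariant geodesics $\Gr{M}^{1/2}e^{tH}\Gr{M}^{1/2}$, with strictness unless $H\in\R\,\Gr{I}_p$, forces every critical point to be a global maximizer and any two maximizers to be proportional. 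This is the Riemannian/geodesic-convexity viewpoint (in the spirit of \cite{Wiesel2012}) and is self-contained for Theorem~\ref{main-th}. The paper's route, by contrast, earns (P1), (P2) and the boundedness of orbits along the way, which are exactly the ingredients it reuses to prove the convergence of the fixed-point algorithm in Theorem~\ref{th2}; your concavity argument does not deliver those by-products, so a separate analysis would still be needed for Theorem~\ref{th2}.

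One small point worth tightening in your write-up: when you assert that $\sum_i y_i(t)^{\beta}$ is log-affine only when $H\in\R\,\Gr{I}_p$, make explicit the two-step reasoning --- first that a sum of log-convex functions is strictly log-convex as soon as one summand is, and second that if every $y_i(t)$ is individually log-affine then each $\Gr{M}^{-1/2}\Gr{x}_i$ lies in a single eigenspace of $H$, so the spanning hypothesis forces all eigenvalues of $H$ to coincide. Your parenthetical ``cannot all lie in a single eigenspace'' compresses this a bit too much.
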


Consequently, $\Mc_{FP}$ is the unique positive definite $p\times p$ matrix of norm
one satisfying
\begin{equation}\label{pf0}
\Mc_{FP}=p \disp \sum\limits_{i=1}^N \dfrac{\mathbf{x}_i \mathbf{x}_i^T}{  \hat{y}_i + \hat{y}_i^{1-\beta} \disp \sum_{j\neq i}\hat{y}_j^{\beta}}
\end{equation}
where $\hat{y}_i = \mathbf{x}_i^T \Mc_{FP}^{-1} \mathbf{x}_i$.

\begin{rem}\label{rem_interior}
Theorem \ref{main-th} relies on the fact that $F_{\chi}$ reaches its maximum in $\D$. In order to prove this result, the function $F_{\chi}$ is continuously extended by the zero function on the boundary of $\D$, except for the zero matrix. Since $F_{\chi}$ is positive and bounded in $\D$, we can conclude (see Appendix \ref{appendix1} for details).
\end{rem}

\noindent As a consequence of Theorem~\ref{main-th}, the following result can be obtained.
\begin{Thm}\label{th2}
Let $S$ be the discrete dynamical system defined on $\D$ by the recursion\begin{equation}\label{dis}
\Gr{M}_{k+1} = f_{\chi}(\Gr{M}_k).
\end{equation}
Then, for all initial conditions $\Gr{M}_0\in \D$, the resulting sequence $(\Gr{M}_k)_{k\geq 0}$ converges to an FP of $f_{\chi}$, i.e., to a point where $F_{\chi}$ reaches its maximum.
\end{Thm}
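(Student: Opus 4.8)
The plan is to use $F_{\chi}$ as a strict Lyapunov function for the recursion~\eqref{dis} and to combine the monotonicity so obtained with the uniqueness part of Theorem~\ref{main-th}.

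\emph{Monotonicity of $F_{\chi}$ along the recursion.} I would first prove that $F_{\chi}(f_{\chi}(\Gr M))\ge F_{\chi}(\Gr M)$ for every $\Gr M\in\D$, with equality if and only if $\Gr M$ is an FP of $f_{\chi}$. In the parametrization $\Gr P=\Gr M^{-1}$ one has $\log F_{\chi}=\log|\Gr P|-\tfrac{p}{\beta}\log\big(\sum_i(\Gr x_i^{\top}\Gr P\Gr x_i)^{\beta}\big)$; since $t\mapsto t^{\beta}$ (with $\beta\in(0,1)$) and $\log$ are concave, the second term is minorized by its tangent affine map at the current iterate $\Gr P_k=\Gr M_k^{-1}$. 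This produces a surrogate $Q(\cdot\mid\Gr P_k)=\log|\Gr P|-p\,\tr(\Gr P\,\Gr A_k)+\text{const}$, with $\Gr A_k\in\D$ by hypothesis $(H)$, which minorizes $\log F_{\chi}$, touches it at $\Gr P_k$, is strictly concave, and whose unique maximizer is exactly $f_{\chi}(\Gr M_k)^{-1}$. The minorize--maximize chain $\log F_{\chi}(\Gr M_{k+1})\ge Q(f_{\chi}(\Gr M_k)^{-1}\mid\Gr P_k)\ge Q(\Gr P_k\mid\Gr P_k)=\log F_{\chi}(\Gr M_k)$, together with strict concavity of $Q$ for the equality case, gives the claim. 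Hence $\big(F_{\chi}(\Gr M_k)\big)_{k\ge0}$ is non-decreasing and, being bounded above by $\max_{\D}F_{\chi}$ (Theorem~\ref{main-th} and Remark~\ref{rem_interior}), it converges.

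\emph{Convergence of the direction.} Since $F_{\chi}$ is homogeneous of degree zero, $F_{\chi}(\widetilde{\Gr M}_k)=F_{\chi}(\Gr M_k)\ge F_{\chi}(\Gr M_0)>0$ where $\widetilde{\Gr M}_k:=\Gr M_k/\|\Gr M_k\|$. Using the continuous extension of $F_{\chi}$ by $0$ on $\partial\D$ (Remark~\ref{rem_interior}), the superlevel set $\K=\{\Gr M\in\Db(1):F_{\chi}(\Gr M)\ge F_{\chi}(\Gr M_0)\}$ is compact and contained in $\D(1)$, so every $\widetilde{\Gr M}_k$ lies in $\K$ and has all eigenvalues in $[\eps,1]$ for some $\eps>0$. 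By homogeneity of degree one, $\widetilde{\Gr M}_{k+1}=g(\widetilde{\Gr M}_k)$ with $g(\Gr M):=f_{\chi}(\Gr M)/\|f_{\chi}(\Gr M)\|$; the map $g$ sends $\K$ into itself, and by the previous step $F_{\chi}(g(\Gr M))=F_{\chi}(f_{\chi}(\Gr M))>F_{\chi}(\Gr M)$ unless $\Gr M$ is an FP of $f_{\chi}$, i.e.\ (within $\D(1)$, by Theorem~\ref{main-th}) unless $\Gr M=\Mc_{FP}$. As $\Mc_{FP}$ is the only point of $\K$ where $F_{\chi}\circ g=F_{\chi}$, LaSalle's invariance principle forces $\widetilde{\Gr M}_k\to\Mc_{FP}$.

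\emph{Convergence of the norm (the main obstacle).} Because the FPs of $f_{\chi}$ form the half-line $\L_{\Mc_{FP}}$, the recursion is only neutrally stable in the radial direction, so ruling out a slow drift of $\|\Gr M_k\|$ requires a quantitative argument rather than soft compactness. A direct computation on the explicit weights of $f_{\chi}$ gives the identity $\tr\big(\Gr M^{-1}f_{\chi}(\Gr M)\big)=p$ for all $\Gr M\in\D$ (the weight of $\Gr x_i\Gr x_i^{\top}$ multiplied by $y_i$ equals $p\,y_i^{\beta}/\sum_j y_j^{\beta}$). Hence $\Gr N_k:=\Gr M_k^{-1/2}\Gr M_{k+1}\Gr M_k^{-1/2}$ has $\tr\Gr N_k=p$, so $\det\Gr N_k\le1$ by the AM--GM inequality and $\det\Gr M_k=\det\Gr M_0\prod_{j<k}\det\Gr N_j$ decreases to some $d^{*}\ge0$. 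To prove $d^{*}>0$: by the previous step $\Gr N_k=\widetilde{\Gr M}_k^{-1/2}f_{\chi}(\widetilde{\Gr M}_k)\widetilde{\Gr M}_k^{-1/2}\to\Mc_{FP}^{-1/2}f_{\chi}(\Mc_{FP})\Mc_{FP}^{-1/2}=\Gr I_p$, and since $\tr(\Gr N_k-\Gr I_p)=0$ a second-order expansion of the determinant yields $1-\det\Gr N_k=O(\|\Gr N_k-\Gr I_p\|^{2})=O(\|f_{\chi}(\widetilde{\Gr M}_k)-\widetilde{\Gr M}_k\|^{2})$, which by~\eqref{nablaF} equals $O(\|\nabla F_{\chi}(\widetilde{\Gr M}_k)\|^{2})$; moreover the uniform strong concavity of the surrogate $Q$ on $\K$ (its concavity modulus is bounded below there) gives a sufficient-increase estimate $\|\nabla F_{\chi}(\widetilde{\Gr M}_k)\|^{2}\le C\big[F_{\chi}(\widetilde{\Gr M}_{k+1})-F_{\chi}(\widetilde{\Gr M}_k)\big]$. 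Summing over $k$, $\sum_k(1-\det\Gr N_k)<\infty$, so $\det\Gr M_k\to d^{*}>0$, whence $\|\Gr M_k\|^{p}=\det\Gr M_k/\det\widetilde{\Gr M}_k\to d^{*}/\det\Mc_{FP}>0$. Together with the previous step this gives $\Gr M_k=\|\Gr M_k\|\,\widetilde{\Gr M}_k\to(d^{*}/\det\Mc_{FP})^{1/p}\,\Mc_{FP}\in\L_{\Mc_{FP}}$, a point at which $F_{\chi}$ attains its maximum.
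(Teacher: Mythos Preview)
Your argument is correct and follows a genuinely different route from the paper's. The paper proceeds order-theoretically: Proposition~\ref{pro:2} shows that $f_{\chi}$ is monotone and super-additive in the Loewner order, which (together with the existence of the fixed point from Proposition~\ref{pro:1} and the results of~\cite{Pasc-08}) forces every orbit to remain sandwiched between two fixed scalar multiples of $\Mc_{FP}$, hence to be bounded in $\D$; the ``eventually strictly increasing'' property of Proposition~\ref{pro:3} is then invoked, again through~\cite{Pasc-08}, to conclude that the $\omega$-limit set of any orbit is a singleton. In particular the norm of $\Gr{M}_k$ is never treated separately---boundedness in $\D$ comes essentially for free once $(P1)$ and a fixed point are available.

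Your approach is variational: you exhibit $F_{\chi}$ as a strict Lyapunov function via an explicit minorize--maximize surrogate (this is a nice observation, essentially exposing an EM-type structure of the recursion), use LaSalle on $\D(1)$ to get convergence of the direction, and then control the norm via the trace identity $\tr\big(\Gr{M}^{-1}f_{\chi}(\Gr{M})\big)=p$---which is precisely the computation appearing in the paper's proof of Lemma~\ref{le:2}---combined with a sufficient-increase estimate. What you gain is a self-contained argument that does not appeal to~\cite{Pasc-08} and that makes the link with the likelihood completely explicit. What you pay is a substantially heavier radial step: turning $\Gr{N}_k\to\Gr{I}_p$ into summability of $1-\det\Gr{N}_k$ requires the quantitative bound $\|f_{\chi}(\widetilde{\Gr M}_k)-\widetilde{\Gr M}_k\|^{2}\le C\big[\log F_{\chi}(\widetilde{\Gr M}_{k+1})-\log F_{\chi}(\widetilde{\Gr M}_k)\big]$. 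That bound does go through (the Hessian of every surrogate $Q(\cdot\mid\Gr{P}_k)$ is that of $\log|\Gr{P}|$, independent of $k$, and both $\widetilde{\Gr M}_k^{-1}$ and $f_{\chi}(\widetilde{\Gr M}_k)^{-1}$ range over a fixed compact convex subset of $\D$ since $\widetilde{\Gr M}_k\in\K$ and $f_{\chi}(\K)$ is compact), but you should spell this out more carefully than you have, as it is the only place where the argument is not routine.
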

Theorem \ref{th2} can be used to characterize, numerically, the points where $F_{\chi}$ reaches its maximum and the value of that maximum. Note that the algorithm defined by (\ref{dis}) does not allow the norm of the FP to be controlled. Therefore, for practical convenience, a slightly modified algorithm can be used in which a $\Gr{M}$-normalization is applied at each iteration. This modified algorithm is proposed in the following corollary
\begin{Coro}\label{coro1}
The recursion
\begin{equation}\label{dis_norm}
\Gr{M}_{k+1}' = \cfrac{f_{\chi}(\Gr{M}_k')}{\textrm{Tr} \left[ f_{\chi}(\Gr{M}_k') \right]}
\end{equation}
initialized by
\begin{equation}\label{init}
\Gr{M}_{0}' = \cfrac{\Gr{M}_0}{\tr \left( \Gr{M}_0 \right)}
\end{equation} yields a sequence of matrices $\{\Gr{M}_0', \hdots, \Gr{M}_k'\}$ which converges to the FP $ \Mc_{FP}$ up to a scaling factor. Moreover, the matrices $\{\Gr{M}_0', \hdots, \Gr{M}_k'\}$ are related to $\{\Gr{M}_0, \hdots, \Gr{M}_k\}$  by
\begin{displaymath}
\Gr{M}_{i}' = \cfrac{\Gr{M}_i}{\tr(\Gr{M}_i)}, \quad 1 \leq i \leq k.
\end{displaymath}
\end{Coro}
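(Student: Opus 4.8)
The plan is to deduce Corollary~\ref{coro1} directly from Theorem~\ref{th2} by exploiting the homogeneity of degree one of $f_{\chi}$, which reduces the normalized recursion to the un-normalized one up to scalar factors. First I would establish, by induction on $i$, the stated identity $\Gr{M}_{i}' = \Gr{M}_i / \tr(\Gr{M}_i)$ (equivalently $\Gr{M}_i' = \lambda_i\,\Gr{M}_i$ for some $\lambda_i>0$). The base case $i=0$ is exactly the initialization~\eqref{init}. For the inductive step, assume $\Gr{M}_k' = \lambda_k \Gr{M}_k$ with $\lambda_k>0$; then since $f_{\chi}$ is homogeneous of degree one,
\begin{displaymath}
f_{\chi}(\Gr{M}_k') = f_{\chi}(\lambda_k \Gr{M}_k) = \lambda_k\, f_{\chi}(\Gr{M}_k) = \lambda_k \,\Gr{M}_{k+1},
\end{displaymath}
so that $\tr[f_{\chi}(\Gr{M}_k')] = \lambda_k \,\tr(\Gr{M}_{k+1})$ (using $\lambda_k>0$ and $\Gr{M}_{k+1}\in\D$, hence $\tr(\Gr{M}_{k+1})>0$), and therefore
\begin{displaymath}
\Gr{M}_{k+1}' = \frac{f_{\chi}(\Gr{M}_k')}{\tr[f_{\chi}(\Gr{M}_k')]} = \frac{\lambda_k \Gr{M}_{k+1}}{\lambda_k \tr(\Gr{M}_{k+1})} = \frac{\Gr{M}_{k+1}}{\tr(\Gr{M}_{k+1})},
\end{displaymath}
which closes the induction and also shows each $\Gr{M}_{k+1}'$ is well-defined and lies in $\D$ with $\tr(\Gr{M}_{k+1}')=p$.

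Next I would invoke Theorem~\ref{th2}: for the initial condition $\Gr{M}_0\in\D$, the sequence $(\Gr{M}_k)_{k\geq 0}$ converges to some FP of $f_{\chi}$, which by Theorem~\ref{main-th} must be of the form $\alpha\,\Mc_{FP}$ for a unique $\alpha>0$ (depending on $\Gr{M}_0$). Then the map $\Gr{M}\mapsto \Gr{M}/\tr(\Gr{M})$ is continuous on $\D$ (the trace is a continuous linear functional, strictly positive on $\D$), so passing to the limit in $\Gr{M}_k' = \Gr{M}_k/\tr(\Gr{M}_k)$ gives
\begin{displaymath}
\Gr{M}_k' \xrightarrow[k\to\infty]{} \frac{\alpha\,\Mc_{FP}}{\tr(\alpha\,\Mc_{FP})} = \frac{\Mc_{FP}}{\tr(\Mc_{FP})},
\end{displaymath}
i.e., the normalized sequence converges to $\Mc_{FP}$ up to the scaling factor $1/\tr(\Mc_{FP})$, as claimed. (If one additionally normalizes $\Mc_{FP}$ in Frobenius norm as in Theorem~\ref{main-th}, the limit is simply the trace-normalized version of that matrix; the two normalizations differ only by a positive scalar, consistent with Remark~\ref{rem:2}.)

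I do not anticipate a genuine obstacle here: the whole argument is an exercise in bookkeeping with positive scalars, and every nontrivial ingredient (existence/uniqueness of the FP up to scale, convergence of the un-normalized recursion, homogeneity of $f_{\chi}$) has already been established. The only point requiring a little care is making sure the normalization is legitimate at every step — that is, that $\tr[f_{\chi}(\Gr{M}_k')]\neq 0$ so the recursion~\eqref{dis_norm} never divides by zero — but this follows immediately from $f_{\chi}$ mapping $\D$ into $\D$ (as recorded in~\eqref{functionf}) and hence producing matrices with strictly positive trace. A secondary subtlety worth a sentence is that Theorem~\ref{th2} only guarantees convergence to \emph{some} FP; uniqueness up to scale from Theorem~\ref{main-th} is what lets us identify the limit of the normalized sequence unambiguously with $\Mc_{FP}$ (up to scaling), independently of which $\alpha$ the un-normalized sequence happens to converge to.
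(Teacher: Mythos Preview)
Your argument is correct and is exactly the justification the paper leaves implicit: Corollary~\ref{coro1} is stated without a separate proof, as an immediate consequence of Theorem~\ref{th2} together with the degree-one homogeneity of $f_{\chi}$, and your induction plus continuity-of-normalization is the natural way to spell that out. One trivial slip: after dividing by the trace you obtain $\tr(\Gr{M}_{k+1}')=1$, not $p$; this does not affect anything.
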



\section{Proof outline}\label{section4}
\noindent This section provides the proofs of Theorems \ref{main-th} and \ref{th2}. Each proof is decomposed into a sequence of lemmas and propositions whose arguments are postponed in the appendices. For the proofs that can be directly obtained from those of \cite{Pasc-08}, we refer to \cite{Pasc-08}. In these cases, the differences due to the definitions of the function $f_{\chi}$ and the MGGD model for the observed vectors $\Gr x_i$, for $i=1,...,N$, imply only slight modifications.

\subsection{Proof of Theorem~\ref{main-th}} \label{4.1}
The proof of Theorem~\ref{main-th} is the consequence of several propositions whose statements are listed below. The first proposition shows the existence of an FP satisfying \eqref{eq:point_fixe_2}.
%

\begin{Prop} \label{pro:1}
The supremum of $F_{\chi}$ in $\D$ is finite and is reached at a point $\Mc_{FP}\in \D$ with $\|\Mc_{FP}\|=1$. Therefore, $f_{\chi}$
admits the open-half line $\L_{\Mc_{FP}}$ as fixed points.
\end{Prop}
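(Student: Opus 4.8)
\textbf{Proof plan for Proposition \ref{pro:1}.}
The plan is to show that $F_{\chi}$, viewed on the compact set $\Db(1)$ after continuous extension to the boundary, attains a strictly positive maximum; homogeneity of degree zero then transports this maximum along the half-line, and the gradient identity \eqref{nablaF} converts the maximizer into a fixed point of $f_{\chi}$, with the whole half-line $\L_{\Mc_{FP}}$ consisting of fixed points by homogeneity of degree one of $f_{\chi}$.

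First I would establish that $F_{\chi}$ is bounded above on $\D$. Write $y_i = \Gr{x}_i^T \Gr{M}^{-1}\Gr{x}_i$ and note that $F_{\chi}(\Gr{M}) = |\Gr{M}|^{-1}\left(\sum_i y_i^{\beta}\right)^{-p/\beta}$. Since $\beta \in (0,1)$ we have $\sum_i y_i^{\beta} \ge \left(\sum_i y_i\right)^{\beta} \cdot c$ for a suitable comparison, but more robustly: by homogeneity of degree zero it suffices to bound $F_{\chi}$ on $\D(1)$. Under hypothesis $(H)$, for any $\Gr{M}\in\D$ at least $p$ of the $y_i$ are controlled from below by the smallest eigenvalue $\lambda_{\min}(\Gr{M}^{-1}) = 1/\lambda_{\max}(\Gr{M})$ times a positive constant depending only on the fixed configuration $\chi$ (this uses Remark \ref{rem:1bis}: any $p$ of the $\Gr{x}_i$ span $\R^p$, so $\Gr{M}^{-1}$ cannot be small in all those directions simultaneously). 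Combining $|\Gr{M}| \le \lambda_{\max}(\Gr{M})^{p}$ with $\sum_i y_i^{\beta} \ge c\,\lambda_{\max}(\Gr{M})^{-\beta}$ gives $F_{\chi}(\Gr{M}) \le c' \lambda_{\max}(\Gr{M})^{-p}\cdot \lambda_{\max}(\Gr{M})^{p} = c'$, a finite bound. So $\sup_{\D} F_{\chi} < \infty$.

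Next I would handle attainment. The issue is that $\D$ is not compact: a maximizing sequence in $\D(1)$ could approach the boundary $\partial\D$, where some eigenvalue of $\Gr{M}$ tends to $0$. I claim $F_{\chi}$ extends continuously by $0$ there (except at $\Gr{M}=\Gr{0}$, which is excluded from $\Db(1)$): if $\lambda_{\min}(\Gr{M})\to 0$ while $\|\Gr{M}\|=1$, then $\Gr{M}^{-1}$ has a blowing-up eigenvalue, and by $(H)$ at least one $y_i\to\infty$, forcing $\left(\sum y_i^{\beta}\right)^{-p/\beta}\to 0$ while $|\Gr{M}|^{-1}$ stays bounded below (since $\|\Gr{M}\|=1$ bounds the other eigenvalues, but $|\Gr{M}|^{-1}$ could also blow up — here I must be careful and instead argue directly that the product $|\Gr{M}|^{-1}\left(\sum y_i^{\beta}\right)^{-p/\beta}\to 0$ by diagonalizing $\Gr{M}$ and tracking the rate of each eigenvalue, as detailed in Appendix \ref{appendix1}). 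Granting the continuous extension, $\Db(1)$ is compact (stated in the notations section), so the extended $F_{\chi}$ attains its maximum on $\Db(1)$; since $F_{\chi}>0$ somewhere in $\D(1)$ (e.g.\ at $\Gr{I}_p/\|\Gr{I}_p\|$) and vanishes on the boundary, the maximizer $\Mc_{FP}$ lies in the interior $\D(1)$, hence $\|\Mc_{FP}\|=1$ and $\Mc_{FP}\in\D$.

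Finally, since $\Mc_{FP}$ is an interior maximizer of $F_{\chi}$ on $\D$ (using degree-zero homogeneity to pass from $\D(1)$ to all of $\D$), it is a critical point: $\nabla F_{\chi}(\Mc_{FP})=0$. By \eqref{nablaF}, $F_{\chi}(\Mc_{FP})\,\Mc_{FP}^{-1}\left[f_{\chi}(\Mc_{FP})-\Mc_{FP}\right]\Mc_{FP}^{-1}=0$; since $F_{\chi}(\Mc_{FP})>0$ and $\Mc_{FP}^{-1}$ is invertible, this gives $f_{\chi}(\Mc_{FP})=\Mc_{FP}$. Then for any $\lambda>0$, homogeneity of degree one of $f_{\chi}$ yields $f_{\chi}(\lambda\Mc_{FP})=\lambda f_{\chi}(\Mc_{FP})=\lambda\Mc_{FP}$, so every point of $\L_{\Mc_{FP}}$ is a fixed point. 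The main obstacle is the boundary analysis: proving the continuous extension by $0$ requires a careful eigenvalue-by-eigenvalue estimate balancing the growth of $|\Gr{M}|^{-1}$ against the decay of $\left(\sum_i y_i^{\beta}\right)^{-p/\beta}$, using $(H)$ to guarantee enough of the $y_i$ blow up at controlled rates; this is where the hypothesis $(H)$ and the restriction $\beta\in(0,1)$ are essential, and it is deferred to Appendix \ref{appendix1}.
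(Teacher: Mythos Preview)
Your plan is essentially the same as the paper's: extend $F_{\chi}$ continuously by zero to $\Db(1)\setminus\D(1)$, use compactness of $\Db(1)$ to obtain a maximizer in the interior $\D(1)$, and then convert the maximizer into a fixed point via \eqref{nablaF}. Two small remarks are worth making.

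First, your separate boundedness paragraph is both unnecessary and flawed. The inequality $|\Gr{M}|\le\lambda_{\max}(\Gr{M})^p$ gives a \emph{lower} bound on $|\Gr{M}|^{-1}$, not an upper bound, so the displayed chain ``$F_{\chi}(\Gr{M})\le c'\lambda_{\max}^{-p}\lambda_{\max}^{p}=c'$'' does not follow. This is harmless because once the continuous extension to $\Db(1)$ is established, continuity on a compact set yields boundedness automatically; simply drop this step. The paper does exactly that: it never proves finiteness of the supremum separately.

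Second, in your final step you argue that, by degree-zero homogeneity, the maximizer on $\D(1)$ is in fact a global maximizer on the open set $\D$, hence $\nabla F_{\chi}(\Mc_{FP})=0$ directly. This is correct and slightly cleaner than what the paper does: the paper (Lemma~\ref{le:2}) treats the maximization as constrained to $\D(1)$, invokes Lagrange multipliers to get $\nabla F_{\chi}(\Mc_{FP})=2\lambda\Mc_{FP}$, and then performs an explicit trace computation showing $\nabla F_{\chi}(\Gr{M})\cdot\Gr{M}=0$ for all $\Gr{M}$, forcing $\lambda=0$. That computation is nothing other than Euler's identity for degree-zero homogeneous functions, so your shortcut and the paper's argument are really the same fact seen from two sides; yours avoids the calculation.
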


\begin{proof}
See Appendix~\ref{appendix1}.
\end{proof}

\bigskip
\noindent It remains to show that there is no other FP of $f_{\chi}$ than those belonging to $\L_{\Mc_{FP}}$. For that
purpose, it is sufficient to show that all FPs of $f_{\chi}$ are collinear. However, Corollary V.1 of \cite{Pasc-08} indicates that
all FPs of $f_{\chi}$ are collinear if all the orbits of $f_{\chi}$ are bounded in $\D$. We recall here that the {\it orbit} of $f_{\chi}$ associated with $\Gr{M}\in\D$ is the trajectory of the dynamical system $S$ defined in (\ref{dis}) starting at $\Gr{M}$ (See \cite{guckenheimer1983nonlinear} for more details about orbits in dynamical systems). Moreover, according to \cite{Pasc-08}, when a function $f_{\chi}$ admits an FP, every orbit of $f_{\chi}$ is bounded if the following proposition is verified.

\begin{Prop}\label{pro:2}
The function $f_{\chi}$ verifies the following properties
\begin{itemize}
\item (P1) :  For all $\Gr{M},\Gr{Q} \in \D$, if $\Gr{M}\leq \Gr{Q}$, then $f_{\chi}(\Gr{M}) \leq f_{\chi}(\Gr{Q})$ (also true with strict inequalities);
\item (P2)~: for all $\Gr{M}, \Gr{Q} \in \D$, then
\begin{equation}
f_{\chi}(\Gr{M}+\Gr{Q}) \geq f_{\chi}(\Gr{M}) +f_{\chi}(\Gr{Q}),
\end{equation}
where equality occurs if and only if $\Gr{M}$ and $\Gr{Q}$ are collinear.
\end{itemize}
\end{Prop}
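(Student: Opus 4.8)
\textbf{Proof proposal for Proposition \ref{pro:2}.}

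The plan is to prove (P1) and (P2) separately, exploiting that $f_{\chi}$ is a sum of terms of the form $w_i(\Gr{M}) \, \Gr{x}_i \Gr{x}_i^T$ with scalar weights $w_i(\Gr{M}) = \dfrac{Np}{y_i + y_i^{1-\beta}\sum_{j\neq i} y_j^{\beta}}$ and $y_j = \Gr{x}_j^T \Gr{M}^{-1} \Gr{x}_j$. Since each summand is a fixed rank-one matrix scaled by a positive scalar, both monotonicity and superadditivity of $f_{\chi}$ will follow once I understand how the scalar weights $w_i$ behave under the Loewner order and under addition of matrices. The key elementary fact I would rely on is that $\Gr{M} \mapsto \Gr{M}^{-1}$ is Loewner-decreasing on $\D$, so that $\Gr{M} \leq \Gr{Q}$ implies $\Gr{Q}^{-1} \leq \Gr{M}^{-1}$ and hence $y_j(\Gr{Q}) = \Gr{x}_j^T \Gr{Q}^{-1} \Gr{x}_j \leq \Gr{x}_j^T \Gr{M}^{-1}\Gr{x}_j = y_j(\Gr{M})$ for every $j$ (strictly, under hypothesis $(H)$, since $\Gr{x}_j \neq 0$).

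For (P1): fix $\Gr{M} \leq \Gr{Q}$ in $\D$. From the above, $y_j(\Gr{Q}) \leq y_j(\Gr{M})$ for all $j$. Now I must check that the scalar map $t \mapsto \dfrac{Np}{t + c\, t^{1-\beta}}$ is non-increasing in $t>0$ for any fixed $c>0$ and $\beta \in (0,1)$; this is immediate because $t + c\,t^{1-\beta}$ is strictly increasing on $\R^+$ (both $t$ and $t^{1-\beta}$ are, as $1-\beta>0$). Applying this coordinatewise with $c = c_i(\Gr{Q}) = \sum_{j\neq i} y_j(\Gr{Q})^\beta$ would not quite work directly because the constant $c_i$ also changes with the matrix; so instead I would argue monotonicity of the full denominator $D_i(\Gr{M}) := y_i + y_i^{1-\beta}\sum_{j\neq i} y_j^\beta$ as a function of $\Gr{M}$: since every $y_j$ decreases when passing from $\Gr{M}$ to $\Gr{Q}$ and $D_i$ is (coordinatewise) increasing in each $y_j \geq 0$ (here $t^\beta$ and $t^{1-\beta}$ are increasing because $\beta, 1-\beta \in (0,1)$), we get $D_i(\Gr{Q}) \leq D_i(\Gr{M})$, hence $w_i(\Gr{Q}) \geq w_i(\Gr{M}) > 0$, hence $f_{\chi}(\Gr{Q}) - f_{\chi}(\Gr{M}) = \frac1N \sum_i (w_i(\Gr{Q}) - w_i(\Gr{M}))\, \Gr{x}_i\Gr{x}_i^T \geq 0$ as a sum of non-negative rank-one terms. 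The strict version follows because $\Gr{M} < \Gr{Q}$ gives strict inequalities $y_j(\Gr{Q}) < y_j(\Gr{M})$, and by $(H)$ the $\Gr{x}_i\Gr{x}_i^T$ span enough directions (any $p$ of them are independent, hence their sum is positive definite) to make the resulting difference positive definite.

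For (P2): fix $\Gr{M}, \Gr{Q} \in \D$. The heart of the matter is the scalar inequality comparing the weight of $f_{\chi}$ at $\Gr{M}+\Gr{Q}$ with the sum of the weights at $\Gr{M}$ and at $\Gr{Q}$, i.e. $w_i(\Gr{M}+\Gr{Q}) \geq w_i(\Gr{M}) + w_i(\Gr{Q})$ for each $i$; summing against $\Gr{x}_i\Gr{x}_i^T$ then yields (P2). Writing $u_j := y_j(\Gr{M}) = \Gr{x}_j^T\Gr{M}^{-1}\Gr{x}_j$ and $v_j := y_j(\Gr{Q})$, the superadditivity of the inverse (a standard matrix inequality: $(\Gr{M}+\Gr{Q})^{-1} \leq \Gr{M}^{-1}$ and similarly $\leq \Gr{Q}^{-1}$, but in fact one needs the sharper $(\Gr{M}+\Gr{Q})^{-1}$ compared with the harmonic-type combination) gives that $y_j(\Gr{M}+\Gr{Q}) \leq \dfrac{u_j v_j}{u_j + v_j}$, i.e. $\frac{1}{y_j(\Gr{M}+\Gr{Q})} \geq \frac{1}{u_j} + \frac{1}{v_j}$, with equality iff $\Gr{M}$ and $\Gr{Q}$ are collinear; this is exactly the scalar convexity lemma underlying the SIRV analysis of \cite{Pasc-08} and is proved by simultaneously diagonalizing $\Gr{M}$ and $\Gr{Q}$ and reducing to the one-dimensional inequality $\frac{1}{a+b} \leq \frac{1}{4}\left(\frac1a + \frac1b\right)$-type estimates, or more precisely to $(\lambda\Gr{M}^{-1} + (1-\lambda)\Gr{Q}^{-1})^{-1} \geq$ the corresponding convex combination. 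I would then need to combine these per-index estimates through the nonlinear expression $D_i = y_i + y_i^{1-\beta}\sum_{j\neq i}y_j^\beta$: the functions $t\mapsto t$ and $t \mapsto t^{1-\beta}$ and $t\mapsto t^\beta$ must interact so that the reciprocal $w_i = Np/D_i$ is superadditive. \textbf{This combinatorial/analytic step is the main obstacle}: unlike the pure SIRV case ($\beta=0$) where the denominator is simply $y_i$ and superadditivity of $1/y_i$ is exactly the inverse inequality, here the extra factor $y_i^{1-\beta}\sum_{j\neq i}y_j^\beta$ couples all the indices, so I expect to need a careful concavity argument — showing that $\Gr{M}\mapsto D_i(\Gr{M})$ behaves like a ``concave, positively homogeneous of degree one'' map along segments (using homogeneity of $f_{\chi}$ of degree one and the fact that $1-\beta+\beta = 1$ makes each product $y_i^{1-\beta}y_j^\beta$ homogeneous of degree $1$ in $(\Gr{M})$ after the inversion is accounted for), so that $1/D_i$ is superadditive with equality precisely under collinearity. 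I would isolate this as a separate scalar lemma (proved by reduction to the simultaneous diagonalization of $\Gr{M}$ and $\Gr{Q}$ and a Hölder/AM–GM argument on the eigenvalue level), and defer its proof to the appendix as the statement already promises.
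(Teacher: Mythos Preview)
Your argument for (P1) is correct and is exactly what the paper does: Loewner monotonicity of $\Gr{M}\mapsto\Gr{M}^{-1}$ forces every $y_j$ to decrease, hence every denominator $D_i$ to decrease, hence every weight $w_i$ to increase.

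For (P2) your overall architecture (per-index superadditivity of the weight $w_i$, built on a ``harmonic'' inequality for the $y_j$'s) is viable and aligned with the paper, but the justification you give for the key step
\[
\frac{1}{\Gr{x}_j^{T}(\Gr{M}+\Gr{Q})^{-1}\Gr{x}_j}\;\ge\;\frac{1}{\Gr{x}_j^{T}\Gr{M}^{-1}\Gr{x}_j}+\frac{1}{\Gr{x}_j^{T}\Gr{Q}^{-1}\Gr{x}_j}
\]
does not work. Simultaneous diagonalization of $\Gr{M}$ and $\Gr{Q}$ does not put $\Gr{x}_j$ into a coordinate direction, and the operator-convexity statement you invoke, ``$(\lambda\Gr{M}^{-1}+(1-\lambda)\Gr{Q}^{-1})^{-1}\ge$ the corresponding convex combination'', is the wrong direction ($A\mapsto A^{-1}$ is operator \emph{convex}). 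What the paper uses instead is the variational identity
\[
\frac{1}{\Gr{x}^{T}\Gr{M}^{-1}\Gr{x}}=\min_{\Gr{z}^{T}\Gr{x}\neq 0}\frac{\Gr{z}^{T}\Gr{M}\Gr{z}}{(\Gr{x}^{T}\Gr{z})^{2}},
\]
whose right-hand side is \emph{linear} in $\Gr{M}$; the superadditivity then drops out of $\min(f_1+f_2)\ge\min f_1+\min f_2$, and the equality analysis is clean because the minimizer lies on the line through $\Gr{M}^{-1}\Gr{x}$. This also shows your equality claim needs adjustment: equality in the displayed inequality for a \emph{single} $j$ only forces $\Gr{M}^{-1}\Gr{x}_j$ and $\Gr{Q}^{-1}\Gr{x}_j$ to be parallel, not $\Gr{M}$ and $\Gr{Q}$; collinearity of the matrices follows only after requiring equality for \emph{all} $j$ and invoking hypothesis $(H)$, as the paper (via \cite{Pasc-08}) does. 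Finally, your ``main obstacle'' is genuine but your proposed route through concavity plus degree-one homogeneity does go through: writing $w_i(\Gr{A})=Np\,b_i^{1-\beta}\big(\sum_j b_j^{-\beta}\big)^{-1}$ with $b_j=1/y_j(\Gr{A})$, this is the geometric mean $b_i^{1-\beta}\psi(b)^{\beta}$ of the two concave, degree-one-homogeneous maps $b\mapsto b_i$ and $\psi(b)=(\sum_j b_j^{-\beta})^{-1/\beta}$, and a single H\"older step gives the needed superadditivity. The paper's own write-up of this combination is terse (it factors $f_{\chi}$ as $g_{\Gr{n}}(\Gr{M})$ times a sum and asserts ``the same reasoning applies''), so a carefully stated scalar lemma along your lines would in fact improve the exposition.
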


\begin{proof}
Since the function $f_{\chi}$ used in this paper differs from the one used in \cite{Pasc-08}, a specific analysis is required. It is the objective of Appendix~\ref{appendix2}.
\end{proof}

\bigskip

\noindent To summarize, Proposition \ref{pro:1} establishes the existence of matrices satisfying the FP equation \eqref{eq:point_fixe_2} while Proposition \ref{pro:2} together with the results of \cite{Pasc-08} can be used to show that there is a unique matrix of norm $1$ satisfying \eqref{eq:point_fixe_2}.


\subsection{Proof of Theorem~\ref{th2}}
In order to prove Theorem~\ref{th2}, we have to show that each orbit of $f_{\chi}$ converges to an FP of $f_{\chi}$. For that purpose, we consider for all $\Gr{M}\in \D$ the positive limit set $\omega(\Gr{M})$ associated with $\Gr{M}$, i.e., the set of cluster points of the sequence $(\Gr{M}_k)_{k \geq 0}$ when $k$ tends to infinity, where $\Gr{M}_{k+1} = f_{\chi}(\Gr{M}_k)$ and $\Gr{M}_0 = \Gr{M}$. Since the orbit of $f_{\chi}$ associated with $\Gr{M}$ is bounded in $\D$, the set $\omega(\Gr{M})$ is a compact of $\D$ and is invariant by $f_{\chi}$:  for all $\Gr{P} \in \omega(\Gr{M})$, $f_{\chi}(\Gr{P}) \in \omega(\Gr{M})$. It is clear that the sequence $(\Gr{M}_k)_{k \geq 0}$ converges if and only if $\omega(\Gr{M})$ reduces to a single point. According to \cite{Pasc-08}, $\omega(\Gr{M})$ reduces to a single point if the following proposition is satisfied.

\begin{Prop}\label{pro:3}
The function $f_{\chi}$ is eventually strictly increasing, i.e., for all $\Gr{Q},\Gr{P}\in\D$ such that $\Gr{Q}\geq \Gr{P}$ and $\Gr{Q}\neq \Gr{P}\,$, then
\begin{equation}
f_{\chi}^{p}(\Gr{Q}) > f_{\chi}^{p}(\Gr{P}).
\end{equation}
\end{Prop}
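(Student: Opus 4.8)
The plan is to prove the stronger statement that $f_{\chi}$ is already strictly increasing after a \emph{single} application, from which the claim for $f_{\chi}^{p}$ follows at once by iteration. The mechanism is that, unlike the map studied in \cite{Pasc-08}, the weight attached to $\Gr{x}_i\Gr{x}_i^{T}$ in $f_{\chi}$ couples \emph{all} the quadratic forms $y_j=\Gr{x}_j^{T}\Gr{M}^{-1}\Gr{x}_j$ through the term $\sum_{j\neq i}y_j^{\beta}$, so that a perturbation concentrated in a single direction immediately affects every weight. Write $f_{\chi}(\Gr{M})=\frac{1}{N}\sum_{i=1}^{N}\varphi_i(\Gr{M})\,\Gr{x}_i\Gr{x}_i^{T}$ with $\varphi_i(\Gr{M})=\frac{Np}{\,y_i+y_i^{1-\beta}\sum_{j\neq i}y_j^{\beta}\,}$; by assumption $(H)$ every $\Gr{x}_i$ is nonzero, so $y_i>0$ on $\D$, and the family $(\Gr{x}_i)_{1\leq i\leq N}$ spans $\R^{p}$ (Remark~\ref{rem:1bis}).

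First I would produce a ``seed'' index. Fix $\Gr{Q}\geq\Gr{P}$ in $\D$ with $\Gr{Q}\neq\Gr{P}$. Inversion reverses the Loewner order, hence $\Gr{A}:=\Gr{P}^{-1}-\Gr{Q}^{-1}$ is non negative, and it is nonzero since $\Gr{A}=\Gr{0}$ would force $\Gr{P}=\Gr{Q}$. If we had $\Gr{x}_i^{T}\Gr{A}\,\Gr{x}_i=0$ for every $i$, then $\Gr{A}\,\Gr{x}_i=\Gr{0}$ for all $i$ (a non negative quadratic form vanishing at $\Gr{x}_i$ forces $\Gr{A}\,\Gr{x}_i=\Gr{0}$), so every $\Gr{x}_i$ would lie in $\ker\Gr{A}$, a subspace of dimension $\leq p-1$, contradicting the fact that the $\Gr{x}_i$ span $\R^{p}$. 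Hence there is an index $i_0$ with $y_{i_0}(\Gr{P})>y_{i_0}(\Gr{Q})$, while $y_i(\Gr{P})\geq y_i(\Gr{Q})$ for every $i$ (again because $\Gr{A}\geq\Gr{0}$).

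Next comes the propagation step. Since $0<\beta<1$, the maps $t\mapsto t^{\beta}$ and $t\mapsto t^{1-\beta}$ are strictly increasing on $(0,\infty)$. For $i=i_0$ the summand $y_{i_0}$ in the denominator of $\varphi_{i_0}$ strictly decreases when passing from $\Gr{P}$ to $\Gr{Q}$, while the remaining contributions are positive and non increasing, so that denominator strictly decreases; for $i\neq i_0$ the sum $\sum_{j\neq i}y_j^{\beta}$ contains the strictly decreasing term $y_{i_0}^{\beta}$, so the product $y_i^{1-\beta}\sum_{j\neq i}y_j^{\beta}$ (a positive non increasing factor times a positive strictly decreasing one) strictly decreases, and so does the whole denominator. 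In all cases $\varphi_i(\Gr{Q})>\varphi_i(\Gr{P})$, whence
\[
f_{\chi}(\Gr{Q})-f_{\chi}(\Gr{P})=\frac{1}{N}\sum_{i=1}^{N}\big(\varphi_i(\Gr{Q})-\varphi_i(\Gr{P})\big)\,\Gr{x}_i\Gr{x}_i^{T}
\]
is a combination of the rank-one matrices $\Gr{x}_i\Gr{x}_i^{T}$ with strictly positive coefficients; since the $\Gr{x}_i$ span $\R^{p}$, this difference is positive definite, i.e. $f_{\chi}(\Gr{Q})>f_{\chi}(\Gr{P})$. Thus $f_{\chi}$ sends any pair $\Gr{Q}\geq\Gr{P}$, $\Gr{Q}\neq\Gr{P}$ to a strictly ordered pair, and applying this $p$ times (the hypotheses being preserved at each stage) yields $f_{\chi}^{p}(\Gr{Q})>f_{\chi}^{p}(\Gr{P})$, which is even stronger than required.

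The main obstacle is the third step: one has to upgrade the single strict inequality $y_{i_0}(\Gr{P})>y_{i_0}(\Gr{Q})$ into strict positivity of \emph{every} coefficient $\varphi_i(\Gr{Q})-\varphi_i(\Gr{P})$, and this is exactly where the specific coupled form of $f_{\chi}$ (together with $\beta\in(0,1)$, needed to keep the power functions increasing) is used; it is also where the proof departs from \cite{Pasc-08}, where each weight depends only on its own $y_i$ and $p$ iterations were genuinely required. The only other point needing care is the quadratic-form argument producing $i_0$, which rests on assumption $(H)$ via Remark~\ref{rem:1bis}.
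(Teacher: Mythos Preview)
Your argument is correct and in fact proves more than the proposition asks: a single application of $f_{\chi}$ already sends $\Gr{Q}\geq\Gr{P}$, $\Gr{Q}\neq\Gr{P}$ to a strictly ordered pair. The key point you exploit, and which you identified accurately, is that the denominator of each weight factors as $y_i^{1-\beta}\sum_{j=1}^{N}y_j^{\beta}$, so the common factor $S(\Gr{M})=\sum_j y_j^{\beta}$ makes \emph{every} weight strictly increase as soon as a single $y_{i_0}$ strictly decreases. The conclusion that a strictly positive combination of the rank-one matrices $\Gr{x}_i\Gr{x}_i^{T}$ is positive definite when the $\Gr{x}_i$ span $\R^p$ is standard and is guaranteed here by hypothesis~$(H)$.

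The paper takes a different route. It first proves the injectivity-type statement ``$\Gr{Q}\geq\Gr{P}$ and $f_{\chi}(\Gr{Q})=f_{\chi}(\Gr{P})$ imply $\Gr{Q}=\Gr{P}$'' and then isolates a single rank-one bump $f_{\chi}(\Gr{Q})\geq f_{\chi}(\Gr{P})+\xi_{1}\,\Gr{x}_{1}\Gr{x}_{1}^{T}$, after which it defers entirely to the $p$-iteration machinery of \cite{Pasc-08} to propagate this bump to full positive definiteness. That argument is modelled on the SIRV case, where each weight depends only on its own $y_i$ and $p$ iterations are genuinely needed. Your approach is shorter and yields a sharper statement for the MGGD map, precisely because it uses the global coupling through $S(\Gr{M})$; the paper's approach, by contrast, stays closer to the template of \cite{Pasc-08} and would survive with less structure on the weights.
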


\begin{proof} Since the function $f_{\chi}$ used in this paper differs from the one used in \cite{Pasc-08}, a specific analysis is required. It is the objective of Appendix~\ref{appendix3}.
\end{proof}

\section{Simulations}
\label{sec:simultations}
This section presents simulation results to evaluate the performance of the MLE for the parameters of MGGDs. The first scenario considers i.i.d. $p$ dimensional data vectors $(\mathbf{x}_1, \ldots, \mathbf{x}_N)$ distributed according to an MGGD. These vectors have been generated using the stochastic representation~\eqref{eq:representation_stochastique} with a matrix $\mathbf{M}$ defined as
\begin{align}
\mathbf{M}(i,j) = \rho^{\vert i-j\vert} \; \textrm{for}  \; i,j \in \llbracket 0,p-1\rrbracket.
\end{align}
In the following, $1000$ Monte Carlo runs have been used in all experiments to evaluate the performance of the proposed estimation algorithms. Before analyzing the performance of the FP estimators based on \eqref{eq:point_fixe_2}, we illustrate the importance of the normalization $\tr \left( \mathbf{M} \right)=p$ advocated in this paper.

\subsection{Influence of the normalization}

The main advantage of the normalization (i.e., decomposition of $\mathbf{\Sigma}$ as the product $ m \times \mathbf{M}$, and trace constraint for the matrix $\mathbf{M}$) concerns the convergence speed of the algorithm. To illustrate this point, Fig.~\ref{fig:convergence_normalisation} shows the evolution of the criterion $D(k)$
	\begin{align}
	D(k) = \dfrac{ \vert\vert \hat{\mathbf{A}}_{k+1} - \hat{\mathbf{A}}_{k} \vert\vert }{ \vert\vert \hat{\mathbf{A}}_{k} \vert\vert },
	\end{align}
where $\hat{\mathbf{A}}_{k} = \hat{\mathbf{\Sigma}}_{k}$ for the blue curves and $\hat{\mathbf{A}}_{k} = \hat{m}_{k} \hat{\mathbf{M}}_{k}$ for the red curves. $\vert\vert \cdot \vert\vert$ is the Frobenius norm and $\hat{\mathbf{A}}_{k}$ is the estimator of $\mathbf{A}$ at step $k$. As observed, the convergence speed is significantly faster when a normalization condition is imposed at each iteration of the algorithm.

\begin{figure}[ht]
 \centering
	\includegraphics[width=\linewidth]{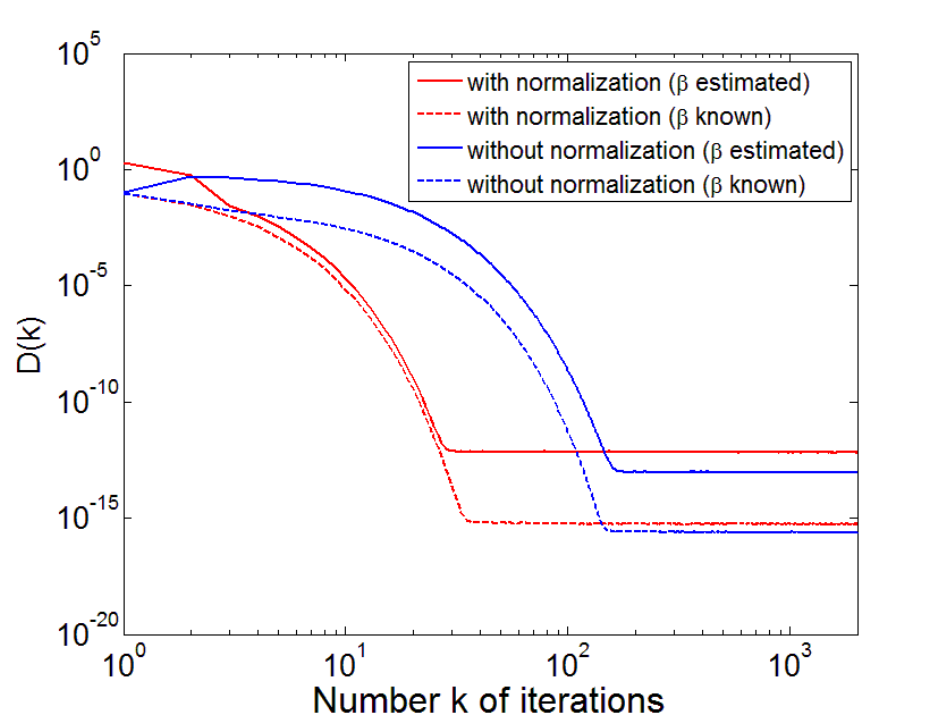}
 \caption{Variations of $D(k)$ versus number of iterations for $p=3$, $\beta=0.2$ and $\rho = 0.8$.}
 \label{fig:convergence_normalisation}
	\end{figure}

Fig.~\ref{fig:perf_normalisation_reviewer1_a} shows the evolution of the estimated bias and consistency of $\hat{\mathbf{A}}$ (the plain curves correspond to $\hat{\mathbf{A}} = \hat{m} \hat{\mathbf{M}}$ whereas $\hat{\mathbf{A}} = \hat{\mathbf{\Sigma}}$ for the dotted lines) versus the number of samples when $\beta$ is not estimated (the parameters are $\beta=0.2$, $\rho=0.8$ and $p=3$). The estimated bias of $\hat{\mathbf{A}}$ is defined as $\vert\vert \overline{\mathbf{A}} - \mathbf{A} \vert\vert$ where the operator $\overline{\mathbf{A}}$ is the empirical mean of the estimated matrices
\begin{align}
\label{eq:bias}
\overline{\mathbf{A}} = \dfrac{1}{I} \sum\limits_{i=1}^I \widehat{\mathbf{A}}(i).
\end{align}
For a given sample size, the experiment are performed $I$ times ($I=100$ in the following). Note that the bias criterion based on~\eqref{eq:bias} was used in~\cite{Pasc-08a} for assessing the performance of matrix estimators. Note also that other approaches based on computing the mean in the manifold of positive definite matrices could also be investigated \cite{sra2012new, fiori2009analgorithm,}. The estimated consistency of $\hat{\mathbf{A}}$ is verified by computing $\vert\vert \widehat{\mathbf{A}} - \mathbf{A} \vert\vert$. As observed, the estimation performance is the same when a normalization constraint for the scatter matrix is imposed or not.

\begin{figure}[ht]
 \centering
	\includegraphics[width=\linewidth]{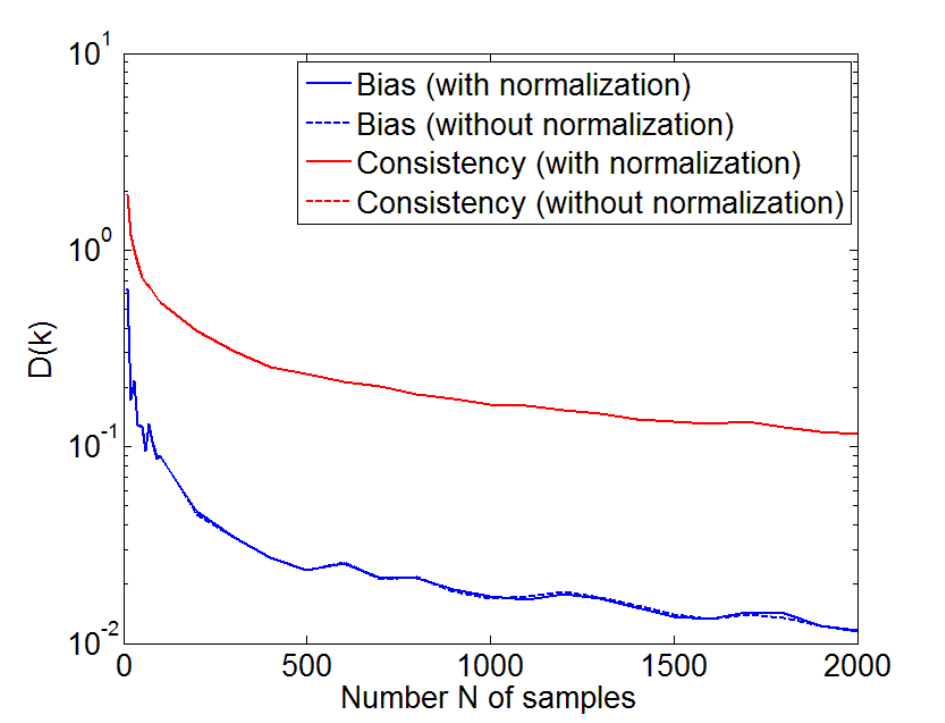}
 \caption{Influence of the normalisation of the scatter matrix on the estimation performance: estimated bias and consistency versus number of samples $N$.}
 \label{fig:perf_normalisation_reviewer1_a}
\end{figure}


A similar comment can be made for the shape parameter $\beta$ when this parameter is estimated (see Fig.~\ref{fig:perf_normalisation_reviewer1_b}). The Fisher information matrix has been recently derived for the parameters of MGGDs~\cite{Verd-11a}. It has been shown that this matrix only depends on the number $N$ of secondary data and the shape parameter $\beta$. The Cram\'er-Rao lower bounds (CRLBs) for the MGGD parameters can then be obtained by inverting the Fisher information matrix. These CRLBS provide a reference (in terms of variance or mean square error) for any unbiased estimator of the MGGD parameters. As observed in Fig.~\ref{fig:perf_normalisation_reviewer1_b}, the variance of $\hat{\beta}$ is very close to the Cram\'er-Rao lower bound for normalized or non-normalized scatter matrices.

\begin{figure}[ht]
 \centering
	\includegraphics[width=\linewidth]{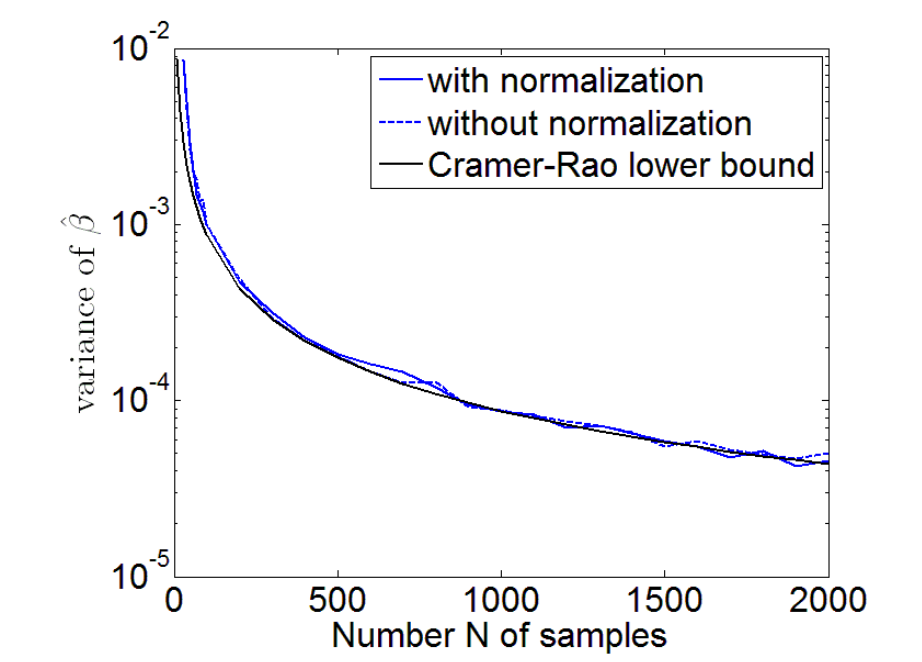}
 \caption{Variance of $\hat{\beta}$ versus number of samples $N$.}
 \label{fig:perf_normalisation_reviewer1_b}
\end{figure}

To summarize, the normalization of the scatter matrix (decomposition of $\mathbf{\Sigma}$ as the product $ m \times \mathbf{M}$, and trace constraint for the matrix $\mathbf{M}$) does not affect the statistical properties of the MLE. However, it ensures an increased convergence speed of the algorithm. Note also that a similar normalization was proposed in \cite[Eq. (15)]{Wiesel2012}.

\subsection{Known shape parameter}

\subsubsection{Convergence of the scatter matrix MLE}
Fig.~\ref{fig:res} shows some convergence results associated with the MLE of the scatter matrix $\mathbf{M}$. These results have been obtained for $p=3$, $\beta = 0.2$ (shape parameter) and $\rho = 0.8$. Convergence results are first analyzed by evaluating the sequence of criteria $C(k)$ defined as
\begin{align}
C(k) = \dfrac{ \vert\vert \widehat{\mathbf{M}}_{k+1} - \widehat{\mathbf{M}}_{k} \vert\vert }{ \vert\vert \widehat{\mathbf{M}}_{k} \vert\vert }
\end{align}
Fig.~\ref{fig:res}.(a) shows examples of criteria $C(k)$ obtained for various initial matrices $\mathbf{M}_0$ (``moments'' stands for $\mathbf{M}_0$ equal to the estimator of moments~\cite{Verd-11a}, ``identity'' stands for $\mathbf{M}_0 = \mathbf{I}_p$ and ``true'' corresponds to $\mathbf{M}_0 = \mathbf{M}$). After about $20$ iterations, all curves converge to the same values. Hence, the convergence speed of the proposed algorithm seems to be independent of its initialization. Fig.~\ref{fig:res}.(b) shows the evolution of criteria $C(k)$ for various numbers $N$ of secondary data. It can be observed that the convergence speed increases with $N$ as expected.

\begin{figure}[ht]
\begin{minipage}[b]{.8\linewidth}
  \centering
  \centerline{\includegraphics[width=\linewidth]{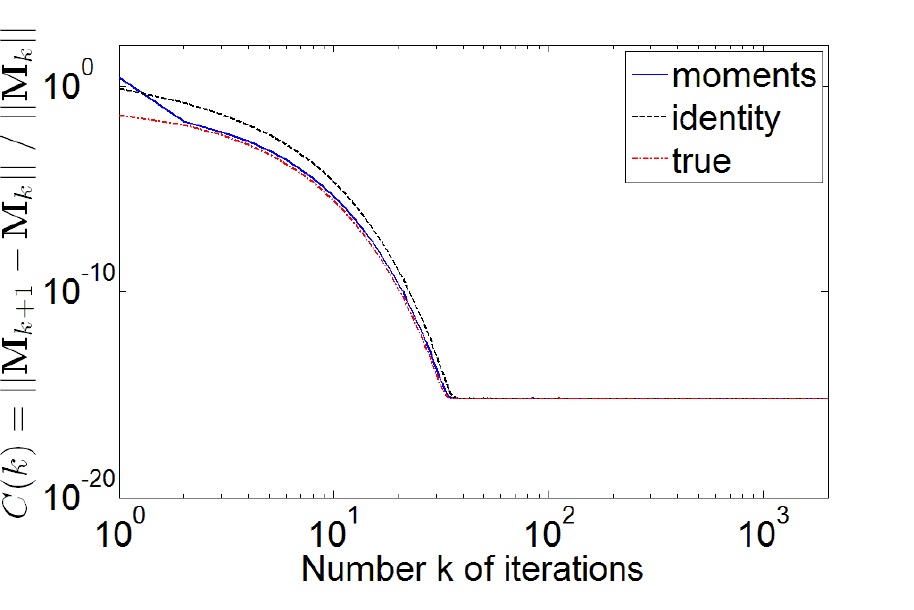}}
  \centerline{(a)}\medskip
\end{minipage}
\hfill
\begin{minipage}[b]{0.8\linewidth}
  \centering
  \centerline{\includegraphics[width=\linewidth]{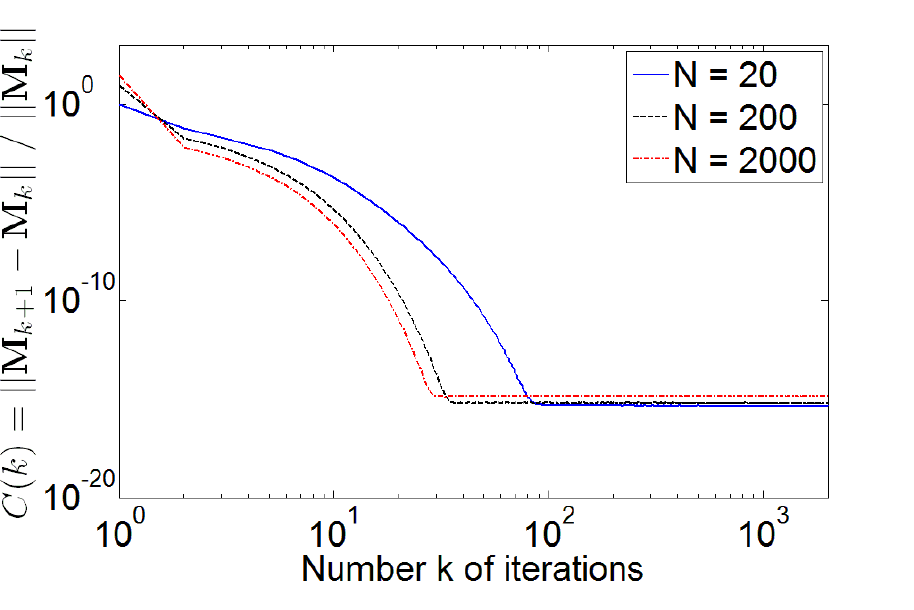}}
  \centerline{(b)}\medskip
\end{minipage}
\caption{Variations of $C(k)$ for $p=3$, $\beta=0.2$ and $\rho = 0.8$. (a) $C(k)$ versus number of iterations for different initializations ($N=200$). (b) $C(k)$ versus number of iterations for various values of $N$.}
\label{fig:res}
\end{figure}

\subsubsection{Bias and consistency analysis}

Fig.~\ref{fig:res_bias_consistency}.(a) shows the estimated bias of $\hat{\mathbf{M}}$ for different values of $\beta$ (precisely for $\beta \in \{0.2,0.5,0.8 \}$). As observed, the bias converges very fast to a small value which is independent of $\beta$.

Fig.~\ref{fig:res_bias_consistency}.(b) presents some results of consistency for the proposed estimator. Here, a plot of $\vert\vert \widehat{\mathbf{M}} - \mathbf{M} \vert\vert$ as a function of the number of samples $N$ is shown for different values of $\beta$ ($0.2$, $0.5$ and $0.8$). It can be noticed that this criterion tends to a	small value when $N$ increases for all values of $\beta$.

\begin{figure}[ht]

\begin{minipage}[b]{.8\linewidth}
  \centering
  \centerline{\includegraphics[width=\linewidth]{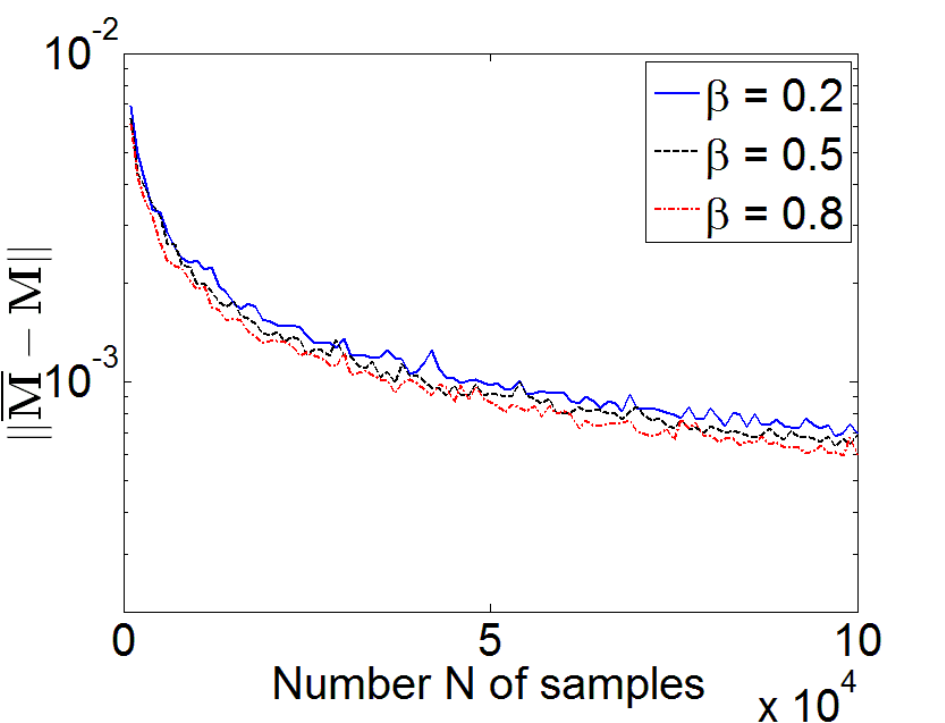}}
  
  \centerline{(a)}\medskip
\end{minipage}
\hfill
\begin{minipage}[b]{0.8\linewidth}
  \centering
  \centerline{\includegraphics[width=\linewidth]{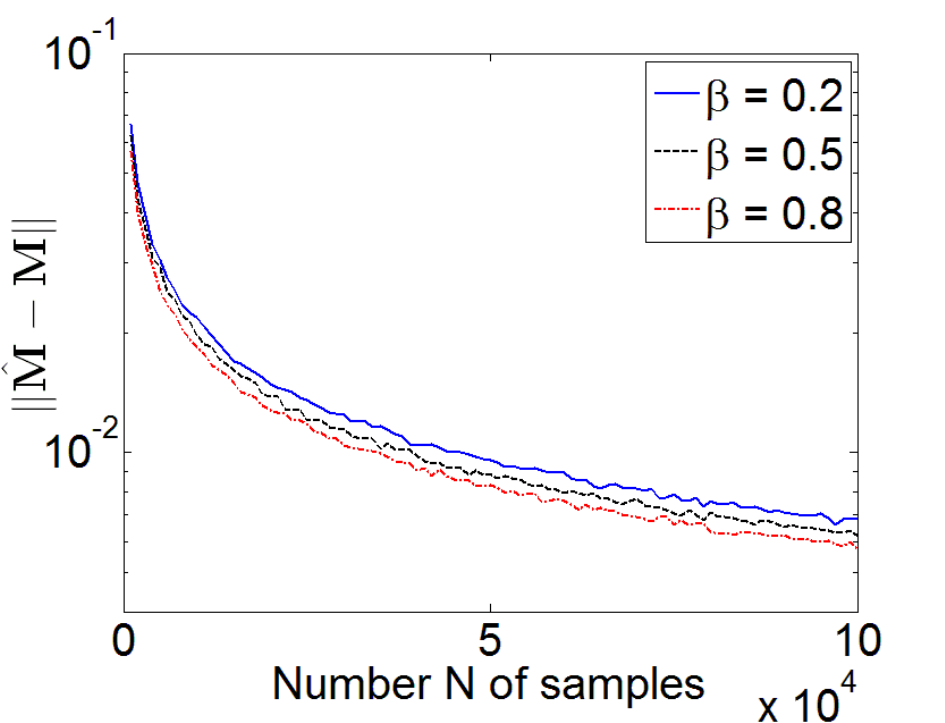}}
  
  \centerline{(b)}\medskip
\end{minipage}
\caption{(a) Estimated bias for different values of $\beta$, (b) estimated consistency for different values of $\beta$.}
\label{fig:res_bias_consistency}
\end{figure}

\subsection{Unknown shape parameter} \label{sec:unknown_shape}
When $\beta$ is unknown, the MLE of $\mathbf{M}$ and $\beta$ is defined by \eqref{eq:point_fixe_2} and \eqref{eq:MV_beta}. If $\mathbf{M}$ would be known, one might think of using a Newton-Raphson procedure to estimate $\beta$. The Newton-Raphson recursion based on \eqref{eq:MV_beta} is defined by the following recursion
\begin{align}
\label{eq:estimateur_beta_newton_raphson_sans}
\hat{\beta}_{n+1} = \hat{\beta}_n - \dfrac{\alpha(\hat{\beta}_n)}{\alpha^{'}(\hat{\beta}_n)}
\end{align}
where $\hat{\beta}_n$ is an estimator of $\beta$ at step $n$, and the function $\alpha(\beta)$ has been defined in~\eqref{eq:MV_beta}. In practice, when the parameters $\mathbf{M}$ and $\beta$ are unknown, we propose the following algorithm to estimate the MGGD parameters.
\begin{algorithm}
\caption{MLE for the parameters of MGGDs}
\label{algo1}
\begin{algorithmic}[1]
\STATE Initialization of $\beta$ and $\mathbf{M}$.
\FOR{$k = 1:\text{N}\_\text{iter}\_\text{max}$}
\STATE Estimation of $\mathbf{M}$ using one iteration of ~\eqref{eq:point_fixe_2} and normalization.
\STATE Estimation of $\beta$ by a Newton-Raphson iteration combining~\eqref{eq:MV_beta} and~\eqref{eq:estimateur_beta_newton_raphson_sans}.
\ENDFOR
\STATE Estimation of $m$ using~\eqref{Mvm}.
\end{algorithmic}
\end{algorithm}

\subsubsection{Bias and consistency analysis}
Fig.~\ref{fig:comparaison_beta_estime_connu} shows a comparison of the algorithm performance when the shape parameter $\beta$ is estimated (solid line) and when it is known (dashed line). As observed, the simulation results obtained with the proposed algorithm are very similar to those obtained for a fixed value of $\beta$.

\begin{figure}[ht]
  \centering
  
  \includegraphics[width=\linewidth]{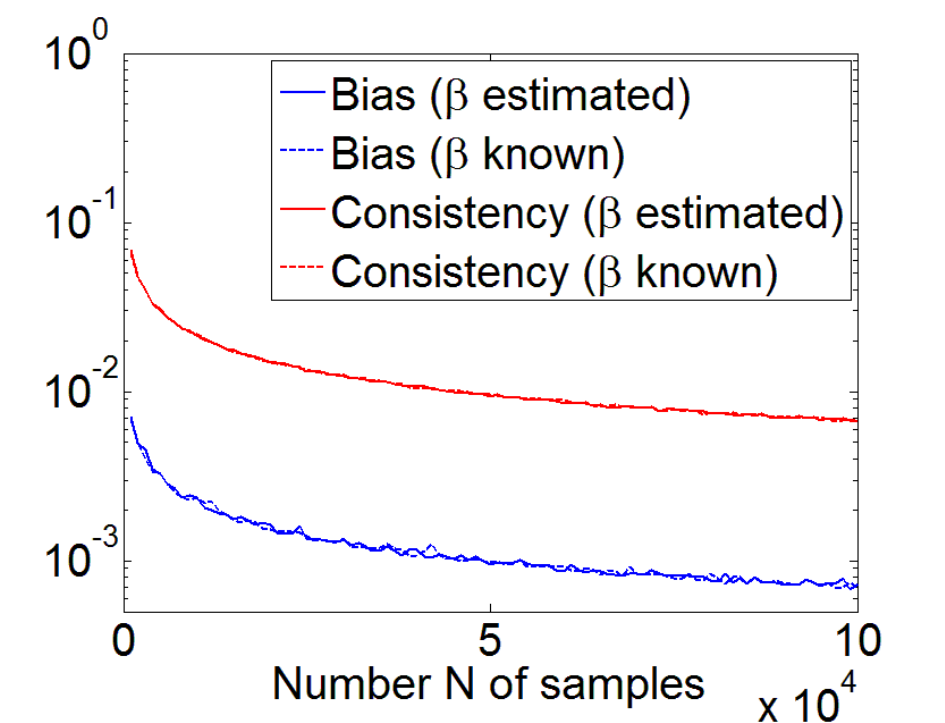}
  \caption{Estimated bias and consistency for $\beta=0.2$.}
  \label{fig:comparaison_beta_estime_connu}
\end{figure}


\subsubsection{Shape parameter $\beta$}
A comparison between the variances of estimators resulting from the method of moments and the ML principle as well as the correspondings CRLBs are depicted in Fig.~\ref{fig:beta} (versus the number of samples and the value of $\beta$). Fig.~\ref{fig:beta}.(a) was obtained for $\beta=0.2$, $\rho=0.8$ and $p=3$, while Fig.~\ref{fig:beta}.(b) corresponds to $N=10~000$, $\rho=0.8$ and $p=3$. The ML method yields lower estimation variances compared to the moment-based approach, as expected. Moreover, the CRLB of $\beta$ is very close to the variance of $\hat{\beta}$ in all cases, illustrating the MLE's efficiency.

\begin{figure}[ht]

\begin{minipage}[b]{.8\linewidth}
  \centering
  \centerline{\includegraphics[width=\linewidth]{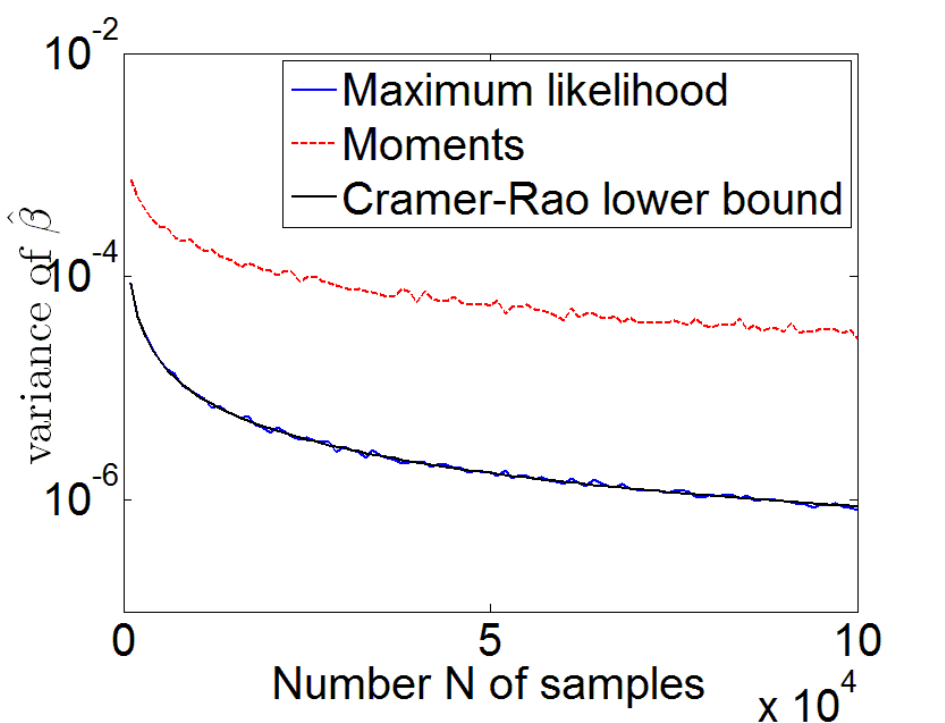}}
  
  \centerline{(a)}\medskip
\end{minipage}
\hfill
\begin{minipage}[b]{0.8\linewidth}
  \centering
  \centerline{\includegraphics[width=\linewidth]{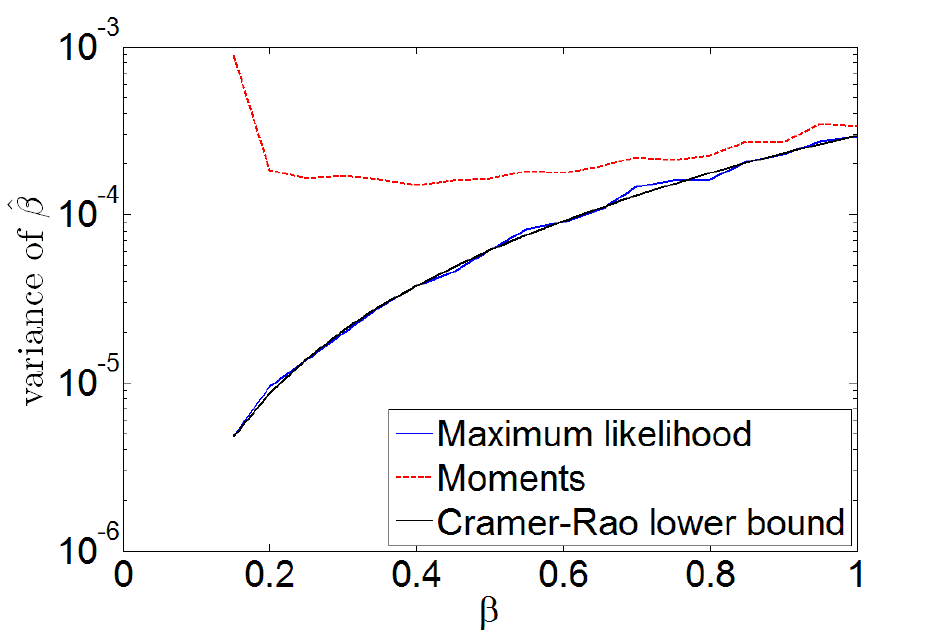}}
  \centerline{(b)}\medskip
\end{minipage}
\vspace{-0.6cm}
\caption{Estimation performance for parameter $\beta$. (a) Variance of $\hat{\beta}$ versus number of samples $N$ for $\beta=0.2$, $\rho=0.8$ and $p=3$, (b) Variance of $\hat{\beta}$ versus $\beta$ for $N=10~000$, $\rho=0.8$ and $p=3$.}
\label{fig:beta}
\vspace{-0.4cm}
\end{figure}

\subsection{Experiments in a real-world setting}
In this part, we propose to evaluate the performance of the MLE for the parameters of MGGDs encountered in a real-world application. MGGDs have been used successfully for modeling the wavelet statistics of texture images~\cite{Kwitt-04, Verd-11b}. In order to analyze the potential of MGGDs for texture modeling, we have considered two images from the VisTex database \cite{Vistex}, namely the ``Bark.0000'' and ``Leaves.0008'' images displayed in Fig.~\ref{fig:images_VisTex}.
The red, green and blue channels of these images have been filtered using the stationary wavelet transform with the Daubechies db4 wavelet. For the the first scale and orientation, the observed vector $\bf{x}$ (of size $p=3$) contains the realizations of the wavelet coefficients for each channel of the RGB image. MGGD parameters have then been estimated using the proposed MLE for an unknown shape parameter (Algorithm~\ref{algo1}), i.e., using the algorithm described in Section \ref{sec:unknown_shape}. The results are reported in Table~\ref{tab:MGGD_parameters}. Fig.~\ref{fig:histo_marginales_VisTex} compares the marginal distributions of the wavelet coefficients with the estimated MGGD and Gaussian distributions for the first subband of the red, green and blue channels (the top figures correspond to the image ``Bark.0000'' whereas the bottom figures are for the image ``Leaves.0008''). These results illustrate the potential of MGGDs for modeling color cue dependencies for texture images.

In the next experiments, we have generated $3$-dimensional data vectors $(\mathbf{x}_1, \ldots, \mathbf{x}_N)$ according to an MGGD with parameters given in Table~\ref{tab:MGGD_parameters}. Fig.~\ref{fig:beta_VisTex} shows the MLE performance for these parameters resulting from real texture images. As observed in Fig.~\ref{fig:beta_VisTex}, the performance of the MLE of $\mathbf{M}$ is very similar when $\beta$ is estimated or not (illustrating the unbiasedness and consistency properties of the scatter matrix estimator and the MLE efficiency of $\widehat{\beta}$ that have also been observed for synthetic data).

\begin{figure}[ht]

\begin{minipage}[b]{0.45\linewidth}
  \centering
  \centerline{\includegraphics[width=\linewidth]{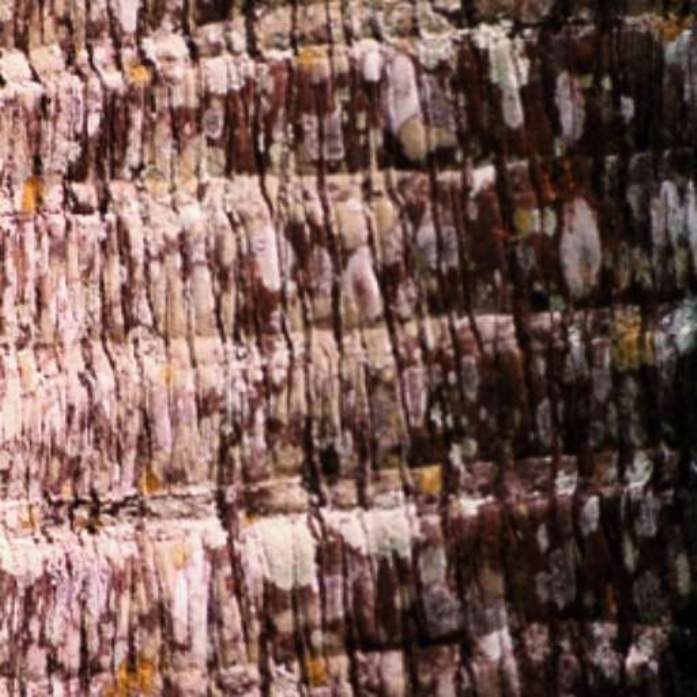}}
  \centerline{(a)}\medskip
\end{minipage}
\hfill
\begin{minipage}[b]{0.45\linewidth}
  \centering
  \centerline{\includegraphics[width=\linewidth]{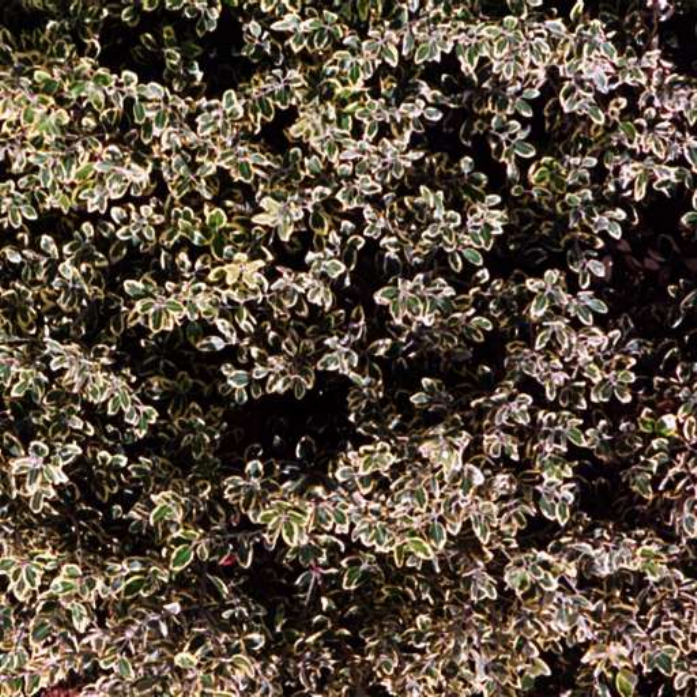}}
  \centerline{(b)}\medskip
\end{minipage}
\vspace{-0.6cm}
\caption{Images from the VisTex database. (a) Bark.0000 and (b) Leaves.0008.}
\label{fig:images_VisTex}
\vspace{-0.4cm}
\end{figure}

 \begin{table*}[!t]
    \caption{Estimated MGGD parameters for the first subband of the Bark.0000 and Leaves.0008 images.}
    \label{tab:MGGD_parameters}
    \begin{center}
       \begin{tabular}{||c|c|c|c||}
       \hline
       \hline
       Image & $\hat{m}$ & $\hat{\beta}$ & $\hat{\bf{M}}$  \\
       \hline
       Bark 0000 & 0.036 & 0.328 & $\begin{bmatrix}
0.988 & 0.992 & 0.883 \\
0.992 & 1.131 & 0.922 \\
0.883 & 0.922 & 0.881 \\
\end{bmatrix}$ \\
       \hline
       Leaves 0008 & 0.054 & 0.265 & $\begin{bmatrix}
0.935 & 0.966 & 0.871 \\
0.966 & 1.074 & 0.976 \\
0.871 & 0.976 & 0.991 \\
\end{bmatrix}$ \\
       \hline
       \hline
       \end{tabular}
    \end{center}
 \end{table*}

\begin{figure*}[ht]

\begin{minipage}[b]{0.3\linewidth}
  \centering
  \centerline{\includegraphics[width=\linewidth]{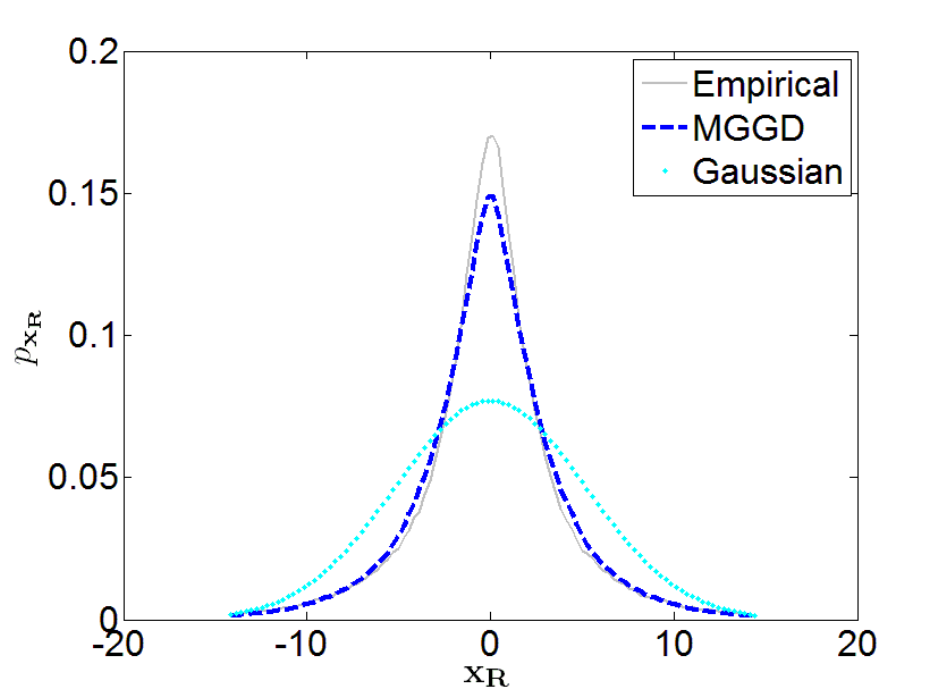}}
  \centerline{(a)}\medskip
\end{minipage}
\hfill
\begin{minipage}[b]{0.3\linewidth}
  \centering
  \centerline{\includegraphics[width=\linewidth]{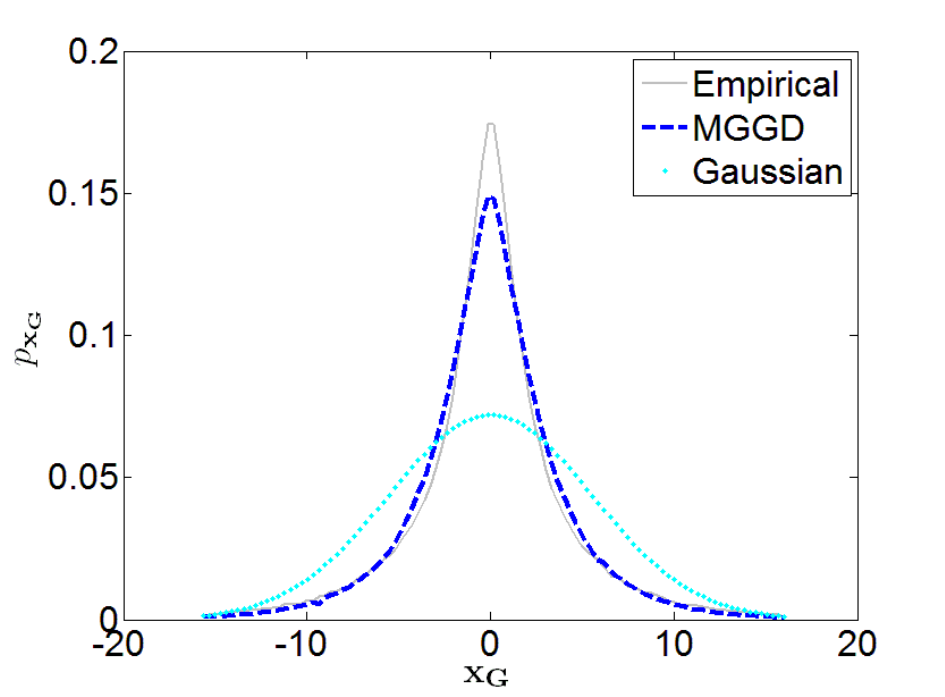}}
  \centerline{(b)}\medskip
\end{minipage}
\hfill
\begin{minipage}[b]{0.3\linewidth}
  \centering
  \centerline{\includegraphics[width=\linewidth]{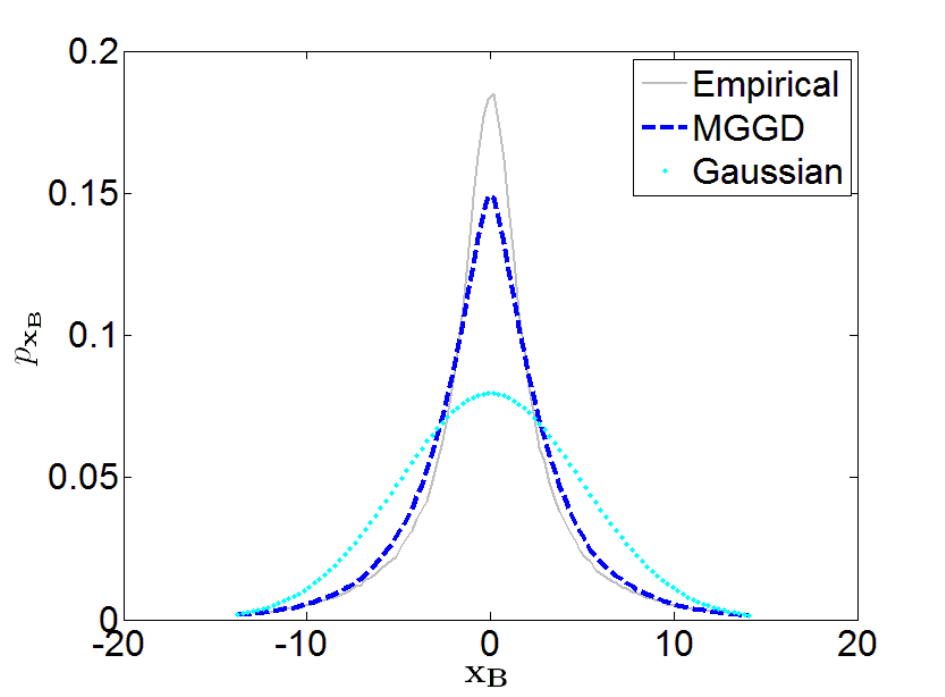}}
  \centerline{(c)}\medskip
\end{minipage}
\begin{minipage}[b]{0.3\linewidth}
  \centering
  \centerline{\includegraphics[width=\linewidth]{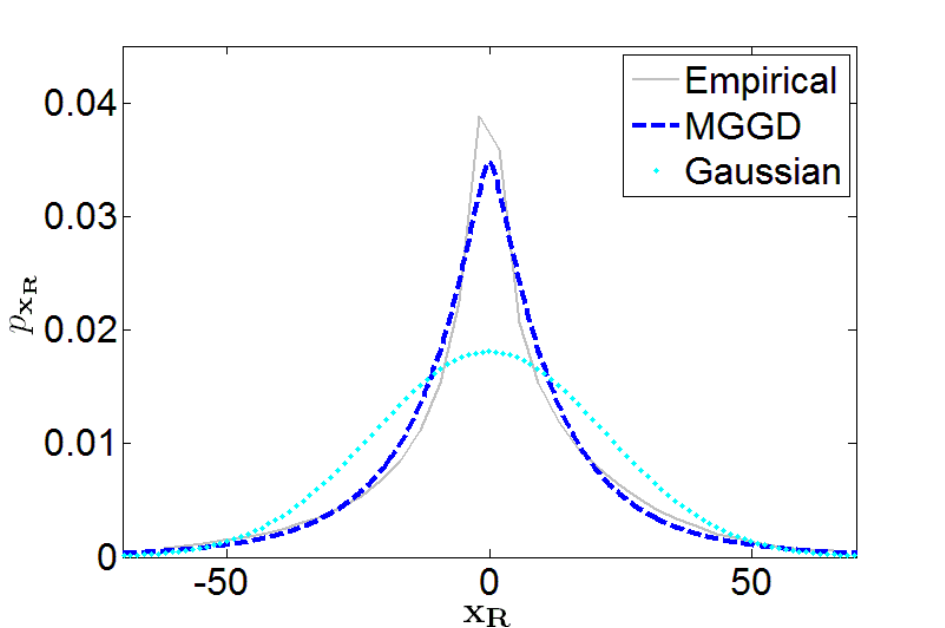}}
  \centerline{(d)}\medskip
\end{minipage}
\hfill
\begin{minipage}[b]{0.3\linewidth}
  \centering
  \centerline{\includegraphics[width=\linewidth]{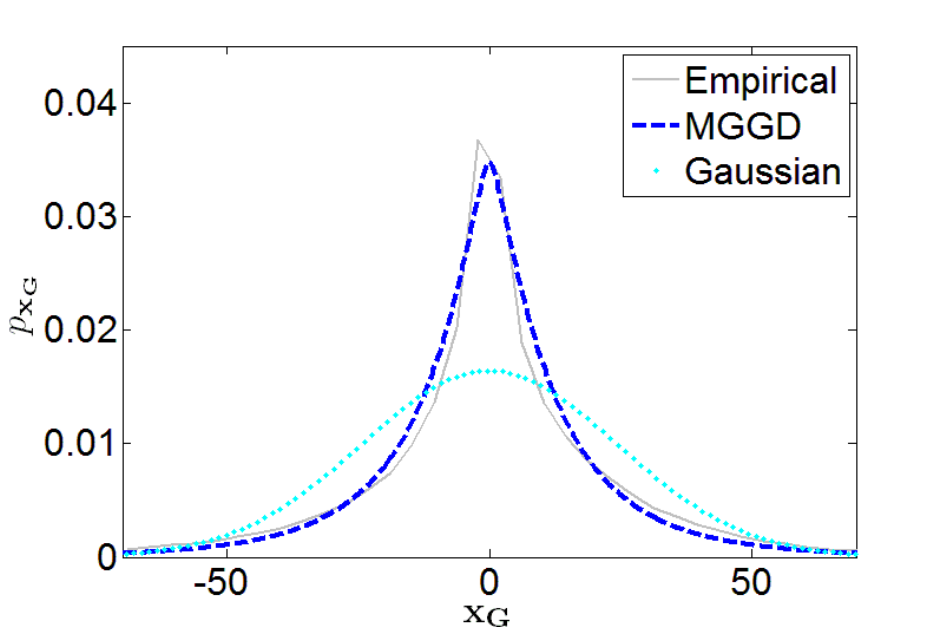}}
  \centerline{(e)}\medskip
\end{minipage}
\hfill
\begin{minipage}[b]{0.3\linewidth}
  \centering
  \centerline{\includegraphics[width=\linewidth]{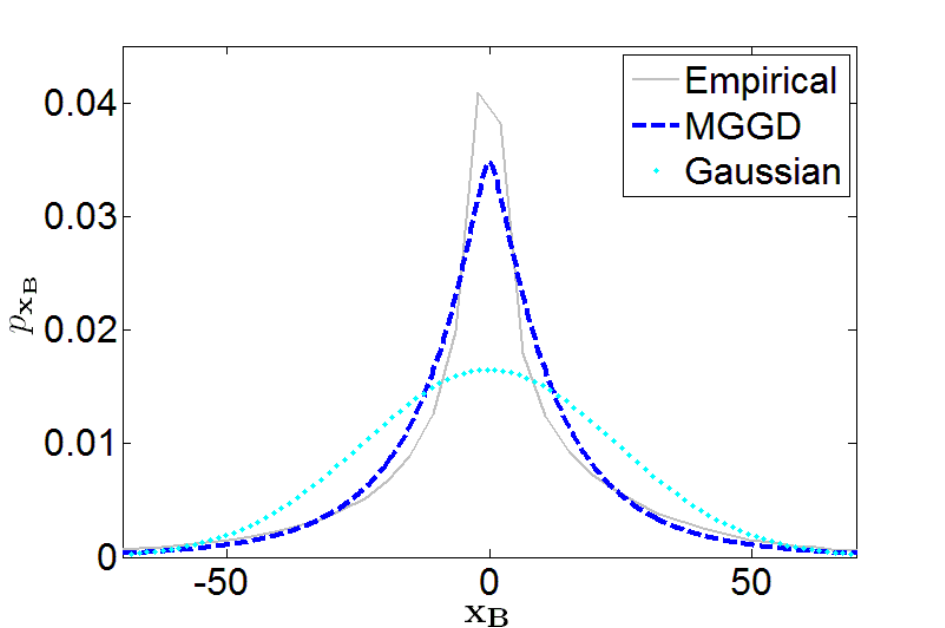}}
  \centerline{(f)}\medskip
\end{minipage}
\vspace{-0.6cm}
\caption{Marginal distributions of the wavelet coefficients with the estimated MGGD and Gaussian distributions of the first subband for the red, green and blue channels of the Bark.0000 (a,b,c) and Leaves.0008 images (d,e,f).}
\label{fig:histo_marginales_VisTex}
\vspace{-0.4cm}
\end{figure*}

\begin{figure*}[ht]

\begin{minipage}[b]{0.45\linewidth}
  \centering
  \centerline{\includegraphics[width=\linewidth]{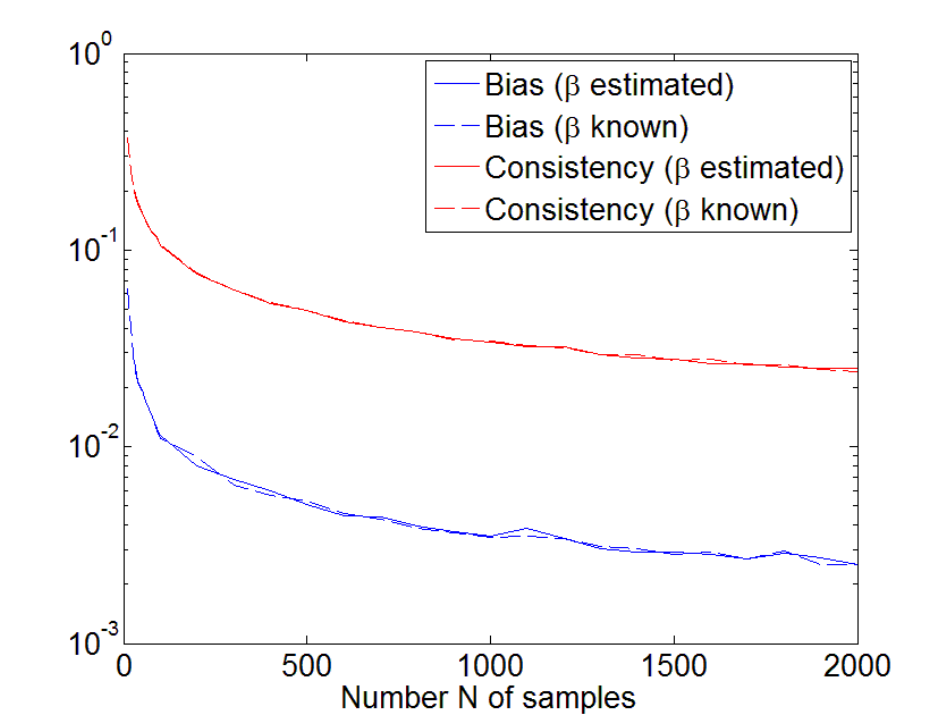}}
  \centerline{(a)}\medskip
\end{minipage}
\hfill
\begin{minipage}[b]{0.45\linewidth}
  \centering
  \centerline{\includegraphics[width=\linewidth]{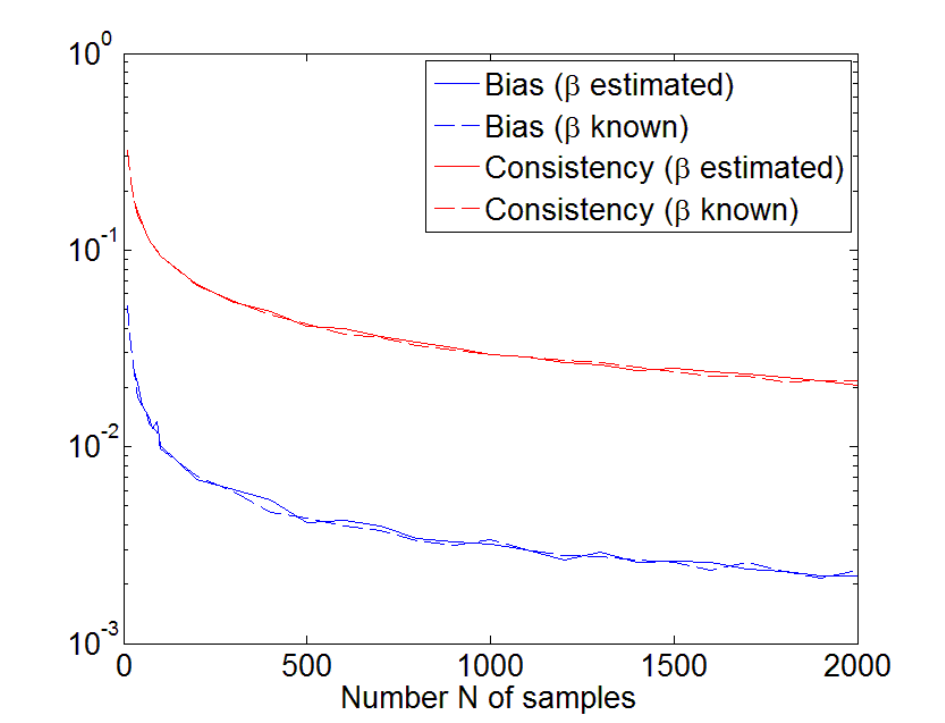}}
  \centerline{(b)}\medskip
\end{minipage}
\begin{minipage}[b]{0.45\linewidth}
  \centering
  \centerline{\includegraphics[width=\linewidth]{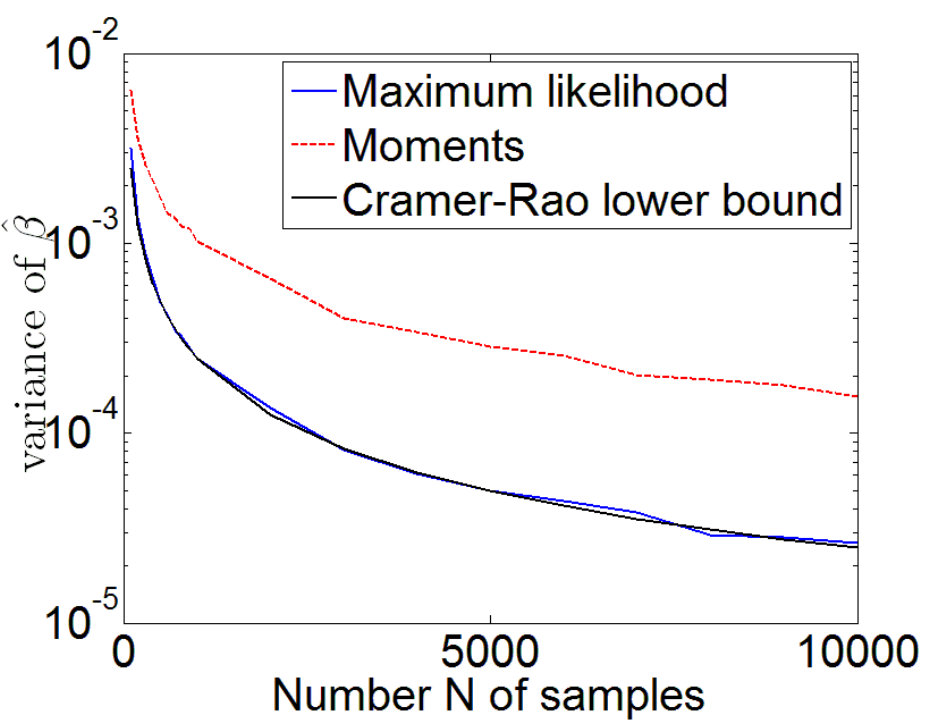}}
  \centerline{(c)}\medskip
\end{minipage}
\hfill
\begin{minipage}[b]{0.45\linewidth}
  \centering
  \centerline{\includegraphics[width=\linewidth]{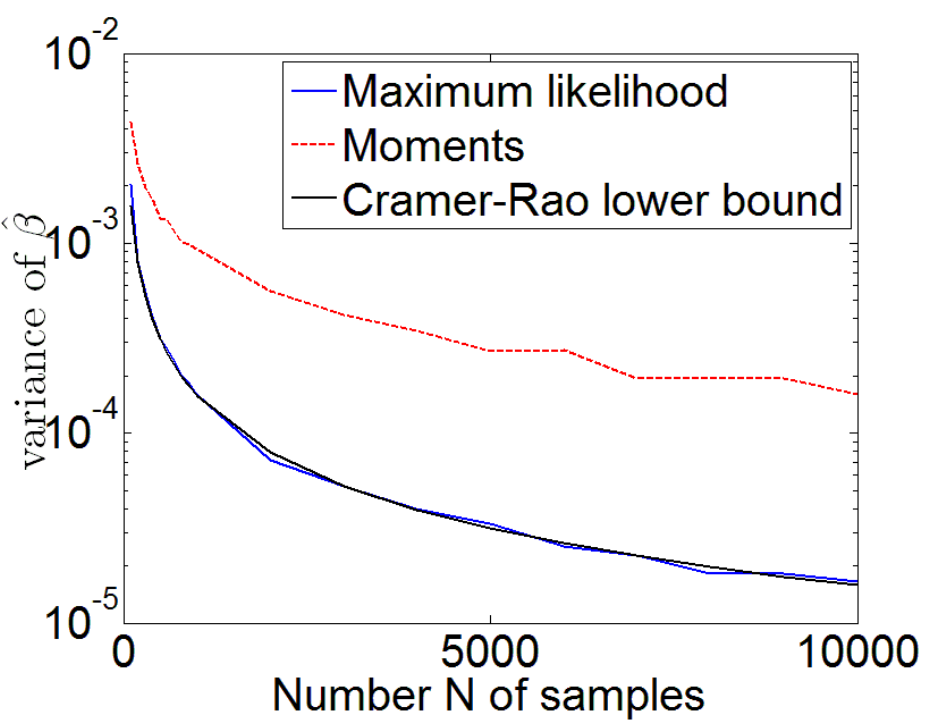}}
  \centerline{(d)}\medskip
\end{minipage}
\vspace{-0.6cm}
\caption{Estimation performance in a real-world setting. Estimated bias and consistency for (a) the Bark.0000 and (b) the Leaves.0008 settings. Variance of $\hat{\beta}$ versus number of samples $N$ for (c) the Bark.0000 and (d) the Leaves.0008 settings.}
\label{fig:beta_VisTex}
\vspace{-0.4cm}
\end{figure*}


\section{Conclusion}
\label{sec:conclusion}

This paper has addressed the problem of estimating the parameters of multivariate generalized Gaussian distributions using the maximum likelihood method. For any shape parameter $\beta \in (0, 1)$, we have proved that the maximum likelihood estimator of the scatter matrix exists and is unique up to a scalar factor. By setting to zero the partial derivative with respect to the scale parameter of the likelihood associated with generalized Gaussian distributions, we obtain a closed form expression of the scale parameter as a function of the scatter matrix. The profile likelihood is then obtained by replacing this expression in the likelihood. The existence of the maximum likelihood estimator of the scatter matrix was proved by showing that this profile likelihood is positive, bounded in the set of symmetric positive definite matrices and equals zero on the boundary of this set. We have also proved that for any initial symmetric positive definite matrix, the sequence of matrices satisfying a fixed point equation converges to the unique maximum of this profile likelihood. Simulations results have illustrated the unbiasedness and consistency properties of the maximum likelihood estimator of the scatter matrix. Surprisingly, these unbiasedness and consistency properties are preserved when the shape parameter of the generalized Gaussian distribution is estimated jointly with the other parameters. Further works include the use of multivariate generalized Gaussian distributions for various remote sensing applications including change detection, image retrieval and image classification.



\renewcommand{\theequation}{\thesection.\arabic{equation}}
\appendices


\section{Proof of Proposition~\ref{pro:1}}
\label{appendix1}
First, it is interesting to note that if $\Mc_{FP}$ is an FP of $f_{\chi}$, $\lambda \, \Mc_{FP}$ is also an FP of $f_{\chi}$ for all $\lambda>0$. This property is a direct consequence of the homogeneity of degree one of $f_{\chi}$. We start by demonstrating the following lemma.

\begin{Lemme}\label{le:1}
The function $F_{\chi}$ can be extended as a continuous function of $\Db\backslash \{\Gr{0}\}$ such that $F_{\chi}(\Gr{M}) = \Gr{0}$ for
all non invertible matrix $\Gr{M}\in \Db\backslash \{\Gr{0}\}$.
\end{Lemme}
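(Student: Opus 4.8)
The plan is to define the extension by setting $F_{\chi}(\Gr M):=0$ at every non-invertible $\Gr M\in\Db\backslash\{\Gr 0\}$, and then to verify that the resulting map is continuous on $\Db\backslash\{\Gr 0\}$. Continuity on the open set $\D$ is immediate: there $F_{\chi}$ is a composition of continuous maps ($\Gr M\mapsto\Gr M^{-1}$, $\Gr M\mapsto|\Gr M|$, the $y_i$'s, $t\mapsto t^{\beta}$) and $\sum_i y_i^{\beta}>0$, since by Remark~\ref{rem:1bis} the vectors $\Gr x_i$ span $\R^p$ so not all $y_i$ vanish. Moreover $F_{\chi}$ is identically $0$ on the non-invertible matrices. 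Hence the only thing to prove is that $F_{\chi}(\Gr M)\to 0$ whenever $\Gr M\in\D$ approaches a non-invertible $\Gr M_0\in\Db\backslash\{\Gr 0\}$.

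The heart of the argument would be the following elementary lower bound, valid for every $\Gr M\in\D$:
\[
\max_{1\le i\le N} y_i \;\ge\; \frac{c}{\lambda_{\min}(\Gr M)},
\]
where $\lambda_{\min}(\Gr M)>0$ is the smallest eigenvalue of $\Gr M$ and $c>0$ depends only on $\Gr x_1,\dots,\Gr x_N$. To obtain it I would take a unit eigenvector $\Gr v$ of $\Gr M^{-1}$ associated with its largest eigenvalue $\lambda_{\min}(\Gr M)^{-1}$, observe that $\Gr M^{-1}\ge\lambda_{\min}(\Gr M)^{-1}\,\Gr v\Gr v^{\top}$ in the Loewner order, hence $y_i\ge\lambda_{\min}(\Gr M)^{-1}(\Gr x_i^{\top}\Gr v)^2$, and finally use that $\Gr u\mapsto\max_i(\Gr x_i^{\top}\Gr u)^2$ is continuous and strictly positive on the unit sphere: it cannot vanish at any $\Gr u$, otherwise $\Gr u$ would be orthogonal to all the $\Gr x_i$, contradicting the fact that they span $\R^p$; being continuous and positive on a compact set it is bounded below by some $c>0$.

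Granting this, $\sum_i y_i^{\beta}\ge(\max_i y_i)^{\beta}\ge c^{\beta}\lambda_{\min}(\Gr M)^{-\beta}$, so $\big(\sum_i y_i^{\beta}\big)^{-p/\beta}\le c^{-p/\beta}\lambda_{\min}(\Gr M)^{p}$. On the determinant side, set $r=\operatorname{rank}(\Gr M_0)$, so $1\le r\le p-1$ because $\Gr M_0$ is non-invertible and $\ne\Gr 0$; ordering the eigenvalues $\lambda_1(\Gr M)\ge\dots\ge\lambda_p(\Gr M)=\lambda_{\min}(\Gr M)$, continuity of the eigenvalues makes $\prod_{k=1}^{r}\lambda_k(\Gr M)^{-1}$ bounded by a constant $B$ on a neighbourhood of $\Gr M_0$ (these $r$ eigenvalues converge to the positive eigenvalues of $\Gr M_0$), while each of the remaining $p-r$ eigenvalues is $\ge\lambda_{\min}(\Gr M)$; hence $|\Gr M|^{-1}\le B\,\lambda_{\min}(\Gr M)^{-(p-r)}$. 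Multiplying the two bounds,
\[
F_{\chi}(\Gr M)\;\le\; B\,c^{-p/\beta}\,\lambda_{\min}(\Gr M)^{\,r},
\]
and since $\lambda_{\min}(\Gr M)\to\lambda_{\min}(\Gr M_0)=0$ while $r\ge 1$, the right-hand side tends to $0$, which is exactly the required continuity at $\Gr M_0$.

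The step I expect to be the main obstacle is finding a lower bound on $\sum_i y_i^{\beta}$ strong enough to beat the blow-up of $|\Gr M|^{-1}$: a priori $|\Gr M|^{-1}$ diverges like $\lambda_{\min}(\Gr M)^{-(p-r)}$, i.e. with as many as $p-r$ vanishing eigenvalues, and a careless estimate would control only one of them. The resolution above is the remark that a \emph{single} blown-up $y_i$ already contributes $\lambda_{\min}(\Gr M)^{-\beta}$, which raised to the power $p/\beta$ yields $\lambda_{\min}(\Gr M)^{-p}$ and thus overwhelms $\lambda_{\min}(\Gr M)^{-(p-r)}$ as soon as $r\ge 1$ --- which is precisely why the zero matrix must be excluded. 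A heavier alternative would be to diagonalize each approximating $\Gr M_n$, extract a subsequence along which the eigenvectors converge, and add up the contributions of the vanishing directions using hypothesis $(H)$ more fully; this also works, but the estimate above avoids subsequences and is cleaner.
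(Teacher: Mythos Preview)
Your argument is correct (up to a harmless constant: $(c^{\beta}\lambda_{\min}^{-\beta})^{-p/\beta}=c^{-p}\lambda_{\min}^{p}$, not $c^{-p/\beta}\lambda_{\min}^{p}$). The route, however, is different from the paper's. The paper first pulls the determinant inside the sum,
\[
F_{\chi}(\Gr M)=\Big(\sum_{i=1}^N\big[\,|\Gr M|\,y_i^{\,p}\,\big]^{\beta/p}\Big)^{-p/\beta},
\]
and then observes that, since $-p/\beta<0$, it is enough that a \emph{single} term $|\Gr M|\,y_{i^*}^{\,p}$ diverges, i.e.\ $\dfrac{1}{|\Gr M|\,y_{i^*}^{\,p}}\to 0$; this last fact is quoted from~\cite{Pasc-08}. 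Your approach reaches the same power count $\lambda_{\min}^{\,p}$ versus $\lambda_{\min}^{\,p-r}$ by direct eigenvalue estimates: the lower bound $\max_i y_i\ge c/\lambda_{\min}$ (via the compactness argument on the unit sphere, using that the $\Gr x_i$ span $\R^p$ under $(H)$) plays the role of the paper's algebraic rewrite, and your determinant bound $|\Gr M|^{-1}\le B\,\lambda_{\min}^{-(p-r)}$ near $\Gr M_0$ makes the cancellation explicit. What the paper's trick buys is brevity and a clean reduction to a known lemma; what yours buys is a fully self-contained proof that does not rely on~\cite{Pasc-08} and that makes transparent why the zero matrix must be excluded (it is exactly the case $r=0$, where the net exponent vanishes).
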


\begin{proof}
It is enough to show that, for all non invertible $\Gr{M}\in \Db\backslash \{\Gr{0}\}$, and all sequence $(\Gr{Q}_k)_{k\geq 0}$ of $\overline{\D}$ converging to zero such that $\Gr{M} + \Gr{Q}_k$ is invertible, we have
\begin{equation}\label{eq2}
\lim_{k\to \infty} F_{\chi}(\Gr{M} + \Gr{Q}_k) = 0. \nonumber
\end{equation}

Using the definition of $F_{\chi}$ in \eqref{functionF}, the following result can be obtained for all $k\geq 0$
\setlength{\arraycolsep}{0.0em}
\begin{multline}
F_{\chi}(\Gr{M}+\Gr{Q}_k) = \\ \be \disp \sum_{i=1}^N \left[ \left|\Gr{M}+\Gr{Q}_k\right|\,\be\Gr{x}_{i}^T(\Gr{M}+\Gr{Q}_k)^{-1}\Gr{x}_{i}\en^p\right]^{\beta/p}\en^{-p/\beta}.\nonumber
\end{multline}
\setlength{\arraycolsep}{5pt}

\noindent Since $-p/\beta<0$, the conclusion holds true if $\exists i^*, 1\leq i^* \leq N$ such that
$$
\lim_{k\to\infty}\frac1{\left|\Gr{M}+\Gr{Q}_k\right|} \cfrac{1}{\left[\Gr{x}_{i^*}^T(\Gr{M}+\Gr{Q}_k)^{-1}\Gr{x}_{i^*}\right]^p}=0.
$$
which was demonstrated in \cite{Pasc-08} and concludes the proof.
\end{proof}

\bigskip

\underline{End of the proof of Proposition~\ref{pro:1}}\\
The end of the proof of Proposition~\ref{pro:1} is similar to the one given in \cite{Pasc-08}. Since $F_{\chi}$ is defined and continuous in the compact $\Db(1)$, this function reaches its maximum in $\Db(1)$ at a point denoted as $\Mc_{FP}$. Since $F_{\chi}$ is strictly positive in $\D(1)$ and equals 0 in $\Db(1) \backslash \D(1)$, the inequality $F_{\chi}(\Mc_{FP}) > 0$ leads to $\Mc_{FP} \in \D(1)$. In order to complete the proof of Proposition~\ref{pro:1}, we need to show the following lemma.

\begin{Lemme}\label{le:2}
Let $\Mc_{FP}\in \D(1)$ maximizing the function $F_{\chi}$. Then, $\nabla F_{\chi}(\Mc_{FP}) = \Gr{0}$, which implies that $\Mc_{FP}$ is an FP of $f_{\chi}$.
\end{Lemme}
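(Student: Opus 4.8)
\textbf{Proof plan for Lemma~\ref{le:2}.}

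The key relation is the gradient formula \eqref{nablaF}, namely
\[
\nabla F_{\chi}(\Gr{M}) = F_{\chi}(\Gr{M}) \, \Gr{M}^{-1} \, \left[ f_{\chi}(\Gr{M}) - \Gr{M} \right]  \, \Gr{M}^{-1},
\]
valid for every $\Gr{M}\in\D$. Since $\Mc_{FP}\in\D(1)$ is an interior point of the open cone $\D$ at which $F_{\chi}$ attains a maximum over $\Db(1)$, the plan is to show that $\Mc_{FP}$ is in fact an unconstrained critical point of $F_{\chi}$ on $\D$, hence $\nabla F_{\chi}(\Mc_{FP}) = \Gr{0}$; then \eqref{nablaF} together with the invertibility of $F_{\chi}(\Mc_{FP})\,\Mc_{FP}^{-1}$ forces $f_{\chi}(\Mc_{FP}) - \Mc_{FP} = \Gr{0}$, i.e.\ $\Mc_{FP}$ is an FP of $f_{\chi}$.

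First I would exploit the homogeneity of degree zero: $F_{\chi}(\lambda\Gr{M}) = F_{\chi}(\Gr{M})$ for all $\lambda>0$, which means $F_{\chi}$ is constant along the open half-line $\L_{\Mc_{FP}}$, so $\Mc_{FP}$ is also a maximizer of $F_{\chi}$ over all of $\D$ (any $\Gr{N}\in\D$ can be rescaled to lie in $\D(1)$ without changing its $F_{\chi}$-value). Thus $\Mc_{FP}$ is a local — indeed global — maximizer of the smooth function $F_{\chi}$ on the \emph{open} set $\D$, and first-order optimality on an open set gives $\nabla F_{\chi}(\Mc_{FP}) = \Gr{0}$ directly, with no Lagrange multiplier needed. (Differentiating the identity $F_{\chi}(\lambda\Mc_{FP})=F_{\chi}(\Mc_{FP})$ in $\lambda$ at $\lambda=1$ also confirms $\langle \nabla F_{\chi}(\Mc_{FP}), \Mc_{FP}\rangle = 0$, consistent with \eqref{nablaF} since $\tr(f_{\chi}(\Mc_{FP}) - \Mc_{FP}) $ need not vanish but the quadratic form in \eqref{nablaF} does pair to zero against $\Mc_{FP}$ in the relevant sense; this is only a sanity check, not part of the argument.)

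Having $\nabla F_{\chi}(\Mc_{FP}) = \Gr{0}$, I substitute into \eqref{nablaF}: since $F_{\chi}(\Mc_{FP})>0$ (established just above in the main proof of Proposition~\ref{pro:1}) and $\Mc_{FP}\in\D$ is invertible, the matrix $\Mc_{FP}^{-1}\,[\,f_{\chi}(\Mc_{FP})-\Mc_{FP}\,]\,\Mc_{FP}^{-1}$ must be the zero matrix; left- and right-multiplying by $\Mc_{FP}$ yields $f_{\chi}(\Mc_{FP}) = \Mc_{FP}$. Combined with the homogeneity of degree one of $f_{\chi}$, this shows the whole half-line $\L_{\Mc_{FP}}$ consists of fixed points, completing Proposition~\ref{pro:1}. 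The only genuinely non-routine point is the derivation of the gradient formula \eqref{nablaF} itself (differentiating $\log|\Gr{M}|$ and $\Gr{x}_i^T\Gr{M}^{-1}\Gr{x}_i$ with respect to the symmetric matrix $\Gr{M}$ and collecting terms), but that computation is already recorded as \eqref{nablaF} in the statement section and may be invoked; everything downstream is elementary linear algebra.
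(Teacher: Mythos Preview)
Your argument is correct and in fact cleaner than the paper's. The paper proceeds via Lagrange multipliers on the constraint surface $\D(1)$: it writes $\nabla F_{\chi}(\Mc_{FP}) = 2\lambda\,\Mc_{FP}$ and then, to show $\lambda=0$, performs an explicit trace computation establishing that $\tr\big[\Gr{M}^{-1}f_{\chi}(\Gr{M})\big]=p$ for every $\Gr{M}\in\D$, whence $\nabla F_{\chi}(\Gr{M})\cdot\Gr{M}=0$ identically. That trace identity is precisely the Euler relation encoding the degree-zero homogeneity of $F_{\chi}$, so the paper is in effect re-deriving the consequence of homogeneity by hand at the level of the gradient.

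You instead invoke the homogeneity directly: since $F_{\chi}$ is scale-invariant, a maximizer on $\D(1)$ is automatically a global maximizer on the open cone $\D$, and first-order optimality on an open set kills the gradient without any multiplier. This bypasses both the Lagrange setup and the trace computation. What the paper's route buys is an explicit verification of the identity $\tr\big[\Gr{M}^{-1}f_{\chi}(\Gr{M})\big]=p$, which is mildly informative in its own right; what your route buys is brevity and a more transparent logical structure. Both rely on \eqref{nablaF} for the final step $f_{\chi}(\Mc_{FP})=\Mc_{FP}$, so the conclusions coincide. Your parenthetical ``sanity check'' is unnecessary and slightly garbled; you can drop it without loss.
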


\begin{proof}
Since the function $F_{\chi}$ defined in \eqref{functionF} differs from the one used in \cite{Pasc-08}, a specific analysis is required. By definition of $\Mc_{FP}$, one has
$$
F_{\chi}(\Mc_{FP}) = \disp \max_{\Gr{M} \in\D(1)} F_{\chi}(\Gr{M}).
$$
By defining $\NN(\Gr{M}) = \|\Gr{M}\|^2 -1$, one has $\NN(\Mc_{FP}) =0$. The Lagrange theorem ensures that $\nabla F_{\chi}(\Mc_{FP}) =
\lambda \nabla \NN(\Mc_{FP}) = 2 \lambda  \Mc_{FP}$ for $\lambda \ge 0$. Straightforward computations lead to
\begin{align*}
\nabla & F_{\chi}(\Gr{M}).\nabla\NN(\Gr{M}) =  \tr [\nabla F_{\chi}(\Gr{M})\,  \nabla \NN(\Gr{M}) ]  \\
& =  2F_{\chi}(\Gr M)\tr \left[ \Gr{M}^{-1}  \be f_{\chi}(\Gr{M}) - \Gr{M} \en \right] \\
& = 2F_{\chi}(\Gr M) \left(\tr \left[ \Gr{M}^{-1}  f_{\chi}(\Gr{M}) \right] -p \right)\\
& = 2F_{\chi}(\Gr M) \left(\dfrac{p}{N} \be \dfrac{1}{N}\disp \sum_{i=1}^N y_i^{\beta}\en^{-1} \disp \sum\limits_{i=1}^N \dfrac{\tr(\Gr M^{-1}\mathbf{x}_i \mathbf{x}_i^T)}{y_i^{1-\beta}} -p \right)\\
& = 2F_{\chi}(\Gr M) \left(\dfrac{p}{N} \be\dfrac{1}{N}\disp \sum_{i=1}^N y_i^{\beta}\en^{-1} \disp \sum\limits_{i=1}^N \dfrac{y_i}{y_i^{1-\beta}} -p \right) = 0.
\end{align*}
Since $\nabla F_{\chi}(\Mc_{FP})= 2 \lambda \Mc_{FP}$, one has $2\lambda=2\lambda~\|\Mc_{FP}\|^2 = \nabla F_{\chi}(\Mc_{FP}).\Mc_{FP} = 0$ which completes the proof of Lemma~\ref{le:2}.
\end{proof}


\section{Proof of Proposition~\ref{pro:2}}
\label{appendix2}
We start by establishing $(P1)$. Let $\Gr{M},\Gr{Q}\in\D$ with $\Gr{M} \leq \Gr{Q}$. Then, $\Gr{M}^{-1} \geq \Gr{Q}^{-1}$ and, for all $1 \leq i \leq N$, we have
\begin{multline*}
\cfrac{1}{ \Gr{x}_i^{\top} \, \Gr{M}^{-1} \, \Gr{x}_i + \be\Gr{x}_i^{\top} \, \Gr{M}^{-1} \, \Gr{x}_i\en^{1-\beta} \disp \sum_{j\neq i}\be\Gr{x}_j^{\top} \, \Gr{M}^{-1} \, \Gr{x}_j\en^{\beta}}
\leq \\ \cfrac{1}{ \Gr{x}_i^{\top} \, \Gr{Q}^{-1} \, \Gr{x}_i + \be\Gr{x}_i^{\top} \, \Gr{Q}^{-1} \, \Gr{x}_i\en^{1-\beta} \disp \sum_{j\neq i}\be\Gr{x}_j^{\top} \, \Gr{Q}^{-1} \, \Gr{x}_j\en^{\beta}},
\end{multline*}
which proves the property $(P1)$. The reasoning for the case with strict inequalities is identical.

We next turn to the proof of $(P2)$. Using the definition of $f_{\chi}$ in \eqref{functionf}, the following result can be easily obtained
\begin{equation}\label{fbis}
f_{\chi}(\Gr{M}) = \dfrac{p}{N} \be\dfrac{1}{N}\disp \sum_{i=1}^N y_i^{\beta}\en^{-1} \disp \sum\limits_{i=1}^N \dfrac{\mathbf{x}_i \mathbf{x}_i^T}{y_i^{1-\beta}}.
\end{equation}
For all unit vector $\Gr{x} \in \R^p$ such that  $\|\Gr{x}\| =1$ and all $\Gr{M} \in \D$, it is well known that
\begin{equation}\label{eq4}
\cfrac{1}{\Gr{x}^{\top} \, \Gr{M}^{-1} \, \Gr{x}} = \disp \min_{\Gr{z}^{\top} \, \Gr{x} \neq 0} \cfrac{\Gr{z}^{\top} \, \Gr{M} \, \Gr{z}}{(\Gr{x}^{\top}\, \Gr{z})^2},
\end{equation}
where the minimum is reached for the vectors $\Gr{z}$ belonging to the line generated by $\Gr{M}^{-1} \, \Gr{x}$. Moreover, since $\beta \in (0,1)$, one has
\begin{eqnarray}\nonumber
\be\cfrac{1}{\Gr{x}^{\top} \, \Gr{M}^{-1} \, \Gr{x}}\en^{1-\beta} & = & \be\disp \inf_{\Gr{z}^{\top} \, \Gr{x} \neq 0} \cfrac{\Gr{z}^{\top} \, \Gr{M} \, \Gr{z}}{(\Gr{x}^{\top}\, \Gr{z})^2}\en^{1-\beta}\\
\be\cfrac{1}{\Gr{x}^{\top} \, \Gr{M}^{-1} \, \Gr{x}}\en^{\beta} & = & \be\disp \inf_{\Gr{z}^{\top} \, \Gr{x} \neq 0} \cfrac{\Gr{z}^{\top} \, \Gr{M} \, \Gr{z}}{(\Gr{x}^{\top}\, \Gr{z})^2}\en^{\beta}.\nonumber
\end{eqnarray}
For $\Gr{M}, \Gr{Q} \in \D\,$, after noting that the function $f_{\chi}(\Gr{M})$ is unchanged if we replace each vector $\Gr{x_i}$ by the normalized vector $\Gr{n_i}=\Gr{x_i}/\| \Gr{x_i}\|$, the following results can be obtained
\begin{align*}
&f_{\chi}(\Gr{M}+\Gr{Q})  \\
& =  \dfrac{p}{N} g_\Gr{n}(\Gr{M}+\Gr{Q}) \disp \sum\limits_{i=1}^N \mathbf{n}_i \mathbf{n}_i^T \left[\min_{\Gr{z}^{\top} \, \Gr{n}_i
\neq 0} \cfrac{\Gr{z}^{\top} \, (\Gr{M}+\Gr{Q}) \, \Gr{z}}{(\Gr{n}_i^{\top}\, \Gr{z})^2} \right]^{1-\beta} \\
& = \dfrac{p}{N} g_\Gr{n}(\Gr{M}+\Gr{Q})\disp \sum\limits_{i=1}^N \mathbf{n}_i \mathbf{n}_i^T  \left( \min_{\Gr{z}^{\top} \Gr{n}_i
\neq 0} \left[ \cfrac{\Gr{z}^{\top} \, \Gr{M} \, \Gr{z}}{(\Gr{n}_i^{\top} \, \Gr{z})^2} +
\cfrac{\Gr{z}^{\top} \, \Gr{Q} \, \Gr{z}}{(\Gr{n}_i^{\top}\, \Gr{z})^2} \right] \right)^{1-\beta},
\end{align*}
where
$$
g_\Gr{n}(\Gr M) = \left[ \dfrac{1}{N}\disp \sum_{i=1}^N \Gr (\Gr{n}_i^T \Gr M^{-1} \Gr{n}_i)^{\beta} \right]^{-1}.
$$
More generally
$$
\disp \min_{\Gr{z} \in \A} \left[ f_1(\Gr{z} ) + f_2(\Gr{z} ) \right] \geq \disp \min_{\Gr{z} \in \A} f_1(\Gr{z} ) + \disp \min_{\Gr{z}  \in \A} f_2(\Gr{z} ),
$$
for all functions $f_1,f_2$ and set $\A$ giving a sense to the previous inequality. The same reasoning can be applied to the function $g_\Gr{n}(.)$ introduced above. Thus, $(P2)$ clearly holds true. It remains to study when equality occurs in $(P2)$. The property $(P2)$ becomes an equality if and only if, for all $1 \leq i \leq N$, one has $$\disp \min_{\Gr{z}^{\top} \, \Gr{n}_i \neq 0} \left[ \cfrac{\Gr{z}^{\top} \, \Gr{M} \, \Gr{z} }{(\Gr{n}_i^{\top} \, \Gr{z} )^2} + \cfrac{\Gr{z}^{\top} \,\Gr{Q} \, \Gr{z} }{(\Gr{n}_i^{\top}\, \Gr{z})^2} \right] =$$
\begin{equation}
\disp \min_{\Gr{z}^{\top} \, \Gr{n}_i \neq 0} \cfrac{\Gr{z}^{\top} \, \Gr{M} \, \Gr{z} }{(\Gr{n}_i^{\top} \, \Gr{z} )^2}\, + \disp \min_{\Gr{z}^{\top} \Gr{n}_i \neq 0} \cfrac{\Gr{z}^{\top} \, \Gr{Q} \, \Gr{z} }{(\Gr{n}_i^{\top}\, \Gr{z} )^2}, \label{eq5}
\end{equation}
which was shown in \cite{Pasc-08} to be true if and only if $\Gr{M}$ and $\Gr{Q}$ are collinear.\EOP


\section{Proof of Proposition~\ref{pro:3}}
\label{appendix3}
The proof of Proposition~\ref{pro:3} is similar to the proof of Proposition V.3 of \cite{Pasc-08}, even if the function $f_{\chi}$ used in this paper differs from the one defined in \cite{Pasc-08}. We first show that for all $\Gr{Q},\Gr{P}\in\D$, we have
\begin{equation}\label{inter}
\mbox{If }\Gr{Q} \geq \Gr{P}\mbox{ and }f_{\chi}(\Gr{Q}) = f_{\chi}(\Gr{P}), \mbox{ then }\Gr{Q} = \Gr{P}.
\end{equation}
Since $\Gr{Q} \geq \Gr{P}$ implies $\Gr{P}^{-1} -\Gr{Q}^{-1}\geq \Gr{0}$, for all $1 \leq i \leq N$, we have
$$
\cfrac{1}{\Gr{x}_i^{\top} \, \Gr{Q}^{-1}\,\Gr{x}_i} \geq \cfrac{1}{\Gr{x}_i^{\top} \,\Gr{P}^{-1}\, \Gr{x}_i}.
$$
Assuming $f_{\chi}(\Gr{Q}) = f_{\chi}(\Gr{P})$ and using hypothesis $(H)$ yields for all $1\leq i \leq N$,
\begin{align*}
(\Gr{x}_i^{\top} \,\Gr{Q}^{-1} \,\Gr{x}_i)^{1-\beta} \be\sum_{j=1}^N \Gr{x}_j^{\top} \,\Gr{Q}^{-1} \,\Gr{x}_j\en  = \\(\Gr{x}_i^{\top} \,\Gr{P}^{-1} \,\Gr{x}_i)^{1-\beta} \be\sum_{j=1}^N \Gr{x}_j^{\top} \,\Gr{P}^{-1} \,\Gr{x}_j\en.
\end{align*}
Moreover, assuming that there exists $i$ such that $\Gr{x}_i^{\top} \, \Gr{Q}^{-1}\,\Gr{x}_i \neq {\Gr{x}_i^{\top} \,\Gr{P}^{-1}\, \Gr{x}_i}$, i.e., $\Gr{x}_i^{\top} \, \Gr{Q}^{-1}\,\Gr{x}_i < {\Gr{x}_i^{\top} \,\Gr{P}^{-1}\, \Gr{x}_i}$ implies
\begin{align*}
(\Gr{x}_i^{\top} \,\Gr{Q}^{-1} \,\Gr{x}_i)^{1-\beta} \be\sum_{j=1}^N \Gr{x}_j^{\top} \,\Gr{Q}^{-1} \,\Gr{x}_j\en < \\ (\Gr{x}_i^{\top} \,\Gr{P}^{-1} \,\Gr{x}_i)^{1-\beta} \be\sum_{j=1}^N \Gr{x}_j^{\top} \,\Gr{P}^{-1} \,\Gr{x}_j\en
\end{align*}
which contradicts $f_{\chi}(\Gr{Q}) = f_{\chi}(\Gr{P})$. Thus, $f_{\chi}(\Gr{Q}) = f_{\chi}(\Gr{P})$ yields $\Gr{x}_i^{\top} \,\Gr{Q}^{-1} \,\Gr{x}_i = \Gr{x}_i^{\top} \,\Gr{P}^{-1} \,\Gr{x}_i$, for all $1 \leq i \leq N$. As a consequence, for all $1 \leq i \leq N$
$$
\Gr{x}_i^{\top}\,(\Gr{P}^{-1} -\Gr{Q}^{-1}) \, \Gr{x}_i = 0.
$$
Since $\Gr{P}^{-1} - \Gr{Q}^{-1}\geq \Gr{0}$, the previous equality indicates that $(\Gr{P}^{-1} - \Gr{Q}^{-1}) \,\Gr{x}_i = \Gr{0}$, for all $1 \leq i \leq N$. Using hypothesis $(H)$, the claim (\ref{inter}) is proved.\\

We now move to the proof of Proposition~\ref{pro:3}. We consider $\Gr{Q},\Gr{P}\in\D$ such that $\Gr{Q}\geq \Gr{P}$ and $\Gr{Q}\neq \Gr{P}$. As shown above, we have $f_{\chi}(\Gr{Q}) \geq f_{\chi}(\Gr{P})$ and $f_{\chi}(\Gr{Q}) \neq f_{\chi}(\Gr{P})$, which implies the existence of an index $i_0 \in \llbracket 1,N\rrbracket$ such that
$$
\xi_{i_0}:= \frac{p}{N} \be \cfrac{1}{\be\Gr{x}_{i_0}^{\top} \, \Gr{Q}^{-1}\, \Gr{x}_{i_0}\en^{1-\beta}} - \cfrac{1}{\be\Gr{x}_{i_0}^{\top} \, \Gr{P}^{-1} \, \Gr{x}_{i_0}\en^{1-\beta}} \en > 0.
$$
Note that for $\beta=0$, this result reduces to what was obtained in  \cite{Pasc-08}. Up to a relabel, we may assume that $i_0 = 1$, hence
\begin{equation}\label{eq7}
f_{\chi}(\Gr{Q}) \geq f_{\chi}(\Gr{P}) + \xi_1 \, \Gr{x}_1 \, \Gr{x}_1^{\top}
\end{equation}
which is the same result as the one obtained in \cite{Pasc-08}. As a consequence, the end of the proof of Proposition V.3 derived in \cite{Pasc-08} can be applied to our problem without any change.

\bibliographystyle{IEEEtran}
\bibliography{biblio2}

\end{document}